\def\BibTeX{{\rm B\kern-.05em{\sc i\kern-.025em b}\kern-.08em
    T\kern-.1667em\lower.7ex\hbox{E}\kern-.125emX}}
\newtheorem{theorem}{Theorem}[section]
\newtheorem{lemma}[theorem]{Lemma}
\newtheorem{proposition}[theorem]{Proposition}
\theoremstyle{definition}
\newtheorem{definition}{Definition}[section]
\newtheorem{example}{Example}[section]
\newtheorem{remark}{Remark}
\newtheorem{problem}{Problem}
\tikzstyle{mystate}=[state,inner sep=1pt,minimum size=10pt,line width=0.2mm]
\tikzstyle{mysquare}=[inner sep=1pt,minimum size=10pt,line width=0.2mm]
\tikzstyle{mybstate}=[rectangle,rounded corners,minimum size=15pt,line width=0.2mm]
\tikzstyle{myestate}=[ellipse,inner sep=1pt,line width=0.2mm]
\newcommand{\SFSAutomatEdge}[5]{\path[->](#1) edge[#4,line width=0.3mm] node[#5] {\ensuremath{#2}} (#3);}
\newcommand{\SFSAutomatEdgeCrossed}[5]{
\path[->](#1) edge[#4,line width=0.3mm] node[#5] {\ensuremath{#2}}  coordinate[pos=0.5] (A) (#3);
\draw[-] (A)+(+0.01,0.06) edge[line width=0.3mm] (A);
\draw[-] (A)+(-0.01,-0.06) edge[line width=0.3mm] (A);
}
\definecolor{sws-blue}{RGB}{83, 151, 204} %
\definecolor{sws-red}{RGB}{197,14,31} 
\definecolor{sws-green}{RGB}{82,171,54}
\newcommand{\lgreen}{sws-green!20}
\newcommand{\gr}{black!20}
\newcommand{\REFlem}[1]{\text{Lem.~\ref{#1}}}
\newcommand{\REFthm}[1]{\text{Thm.~\ref{#1}}}
\newcommand{\REFdef}[1]{Def.~\ref{#1}}
\newcommand{\REFsec}[1]{Sec.~\ref{#1}}
\newcommand{\REFprop}[1]{Prop.~\ref{#1}}
\newcommand{\REFfig}[1]{Fig.~\ref{#1}}
\newcommand{\REFproblem}[1]{Problem~\ref{#1}}
\newcommand{\REFex}[1]{Example~\ref{#1}}
\newcommand{\D}{\mathop{:=}}
\newcommand{\dotcup}{\dot{\cup}}
\newcommand{\ASUBS}[1]{\raisebox{0.1ex}[0.5ex][-0.2ex]{$\scriptstyle\mathrm{#1}$}}
\newcommand{\ALPHABET}[1]{\Sigma^{\raisebox{0.2ex}[0.0ex][0.0ex]{\phantom{$\scriptstyle *$}}}_{\ASUBS{#1}}}
\newcommand{\ASigma}{\Sigma}
\newcommand{\ASigmac}{\ALPHABET{c}}
\newcommand{\ASigmauc}{\ALPHABET{uc}}
\newcommand{\LPrefix}[1]{\mathop{\mathrm{pfx}}{(#1)}}
\newcommand{\SGenerated}[1]{\ON{\mathbf{s}}\ifthenelse{\isempty{#1}}{}{(#1)}}
\newcommand{\GG}{G}
\newcommand{\Aut}{M}
\newcommand\set[1]{\{ #1 \}}
\newcommand\tuple[1]{{\langle #1 \rangle}}
\let\exampleOrig\endexample
\def\endexample{\hspace*{0pt}\hfill$\triangleleft$\exampleOrig}
\newcommand{\ON}[1]{\mathsf{#1}}
\def\clap#1{\hbox to 0pt{\hss#1\hss}}
\newif\ifFIRST
\newif\ifSECOND
\let\LISTOP\relax
\newcommand{\List}[4][\;]{#3#1%
        \FIRSTtrue
        \@for\i:=#2\do{%
        \ifFIRST\LISTOP{\i}\FIRSTfalse\else,\LISTOP{\i}\fi%
        }%
        #1#4%
        \let\LISTOP\relax
}
\newcounter{DINGLIST}
\newcommand{\markD}[3][\;\;]{\text{\ding{\the\numexpr171+\theDINGLIST}\stepcounter{DINGLIST}}#1#3}
\newcommand{\propNeg}{\@ifstar\propNegStar\propNegNoStar}
\newcommand{\propNegStar}[1]{\ensuremath{\left(\propNegNoStar{#1}\right)}}
\newcommand{\propNegNoStar}[2][\cdot]{\ensuremath{\neg\ifthenelse{\isempty{#2}}{#1}{#2}}}
\newcommand{\propConj}{\@ifstar\propConjStar\propConjNoStar}
\newcommand{\propConjStar}[2]{\ensuremath{\left(\propConjNoStar{#1}{#2}\right)}}
\newcommand{\propConjNoStar}[3][\cdot]{\ensuremath{\ifthenelse{\isempty{#2}}{#1}{#2}\wedge\ifthenelse{\isempty{#3}}{#1}{#3}}}
\newcommand{\propDisj}{\@ifstar\propDisjStar\propDisjNoStar}
\newcommand{\propDisjStar}[2]{\ensuremath{\left(\propDisjNoStar{#1}{#2}\right)}}
\newcommand{\propDisjNoStar}[3][\cdot]{\ensuremath{\ifthenelse{\isempty{#2}}{#1}{#2}\vee\ifthenelse{\isempty{#3}}{#1}{#3}}}
\newcommand{\propImp}{\@ifstar\propImpStar\propImpNoStar}
\newcommand{\propImpStar}[2]{\ensuremath{\left(\propImpNoStar{#1}{#2}\right)}}
\newcommand{\propImpNoStar}[3][\cdot]{\ensuremath{\ifthenelse{\isempty{#2}}{#1}{#2}\Rightarrow\ifthenelse{\isempty{#3}}{#1}{#3}}}
\newcommand{\propAequ}{\@ifstar\propAequStar\propAequNoStar}
\newcommand{\propAequStar}[2]{\ensuremath{\left(\propAequNoStar{#1}{#2}\right)}}
\newcommand{\propAequNoStar}[3][\cdot]{\ensuremath{\ifthenelse{\isempty{#2}}{#1}{#2}\Leftrightarrow\ifthenelse{\isempty{#3}}{#1}{#3}}}
\newcommand{\AllQ}{\@ifstar\AllQStar\AllQNoStar}
\newcommand{\AllQStar}[3][\;]{\ensuremath{\left(\forall #2#1.#1#3\right)}}
\newcommand{\AllQNoStar}[3][\;]{\ensuremath{\forall #2#1.#1#3}}
\newcommand{\AllQu}{\@ifstar\AllQuStar\AllQuNoStar}
\newcommand{\AllQuStar}[3][\;]{\ensuremath{\left(\forall^{\infty} #2#1.#1#3\right)}}
\newcommand{\AllQuNoStar}[3][\;]{\ensuremath{\forall^{\infty} #2#1.#1#3}}
\newcommand{\ExQ}{\@ifstar\ExQStar\ExQNoStar}
\newcommand{\ExQStar}[3][\;]{\ensuremath{\left(\exists #2#1.#1#3\right)}}
\newcommand{\ExQNoStar}[3][\;]{\ensuremath{\exists #2#1.#1#3}}
\newcommand{\NExQ}{\@ifstar\NExQStar\NExQNoStar}
\newcommand{\NExQStar}[3][\;]{\ensuremath{\left(\nexists #2#1.#1#3\right)}}
\newcommand{\NExQNoStar}[3][\;]{\ensuremath{\nexists #2#1.#1#3}}
\newcommand{\UniqueQ}{\@ifstar\UniqueQStar\UniqueQNoStar}
\newcommand{\UniqueQStar}[3][\;]{\ensuremath{\left(\exists! #2#1.#1#3\right)}}
\newcommand{\UniqueQNoStar}[3][\;]{\ensuremath{\exists! #2#1.#1#3}}
  \newlength{\SFS@HEIGHT}
  \newlength{\SFS@WIDTH}
  \newcommand{\SplitX}[2]{
            \settoheight{\SFS@HEIGHT}{$#2$}
            \settowidth{\SFS@WIDTH}{$#2$}
            \mbox{\begin{tikzpicture}[baseline=(current bounding box.center)]
            \node[] (E) at (0,0) {$#1$};
            \node[inner sep=0pt] (F) at ($(E.south west)+(1ex,-1ex)+(3ex+.5\SFS@WIDTH,-\SFS@HEIGHT)$) {$#2$};
            \node[] (E) at (0,0) {\phantom{$#1$}};
            \draw[fill] ($(E.east)+(1ex,0ex)$) circle (.2ex);
            \draw[-] ($(E.east)+(1ex,0ex)$) -- ($(E.south east)+(1ex,-0.5ex)$) -- ($(E.south west)+(1ex,-0.5ex)$) -- ($(E.south west)+(1ex,-1ex)-(0,\SFS@HEIGHT)$) -- ($(E.south west)+(2.5ex,-1ex)-(0,\SFS@HEIGHT)$);
            \draw[fill] ($(E.south west)+(2.5ex,-1ex)-(0,\SFS@HEIGHT)$) circle (.2ex);
            \end{tikzpicture}}}
  \newcommand{\SplitS}[2]{
            \settoheight{\SFS@HEIGHT}{$#2$}
            \settowidth{\SFS@WIDTH}{$#2$}
            \mbox{\begin{tikzpicture}[baseline=(current bounding box.center)]
            \node[] (E) at (0,0) {$#1$};
            \node[inner sep=0pt] (F) at ($(E.south west)+(1ex,0.5ex)+(3ex+.5\SFS@WIDTH,-\SFS@HEIGHT)$) {$#2$};
            \end{tikzpicture}}}     
\newcommand{\UNION}{\@ifstar\UNIONStar\UNIONNoStar}
\newcommand{\UNIONStar}[2]{\ensuremath{\left(\UNIONNoStar{#1}{#2}\right)}}
\newcommand{\UNIONNoStar}[2]{\ensuremath{\ifthenelse{\isempty{#1}}{\cdot}{#1}\cup\ifthenelse{\isempty{#2}}{\cdot}{#2}}}
\newcommand{\UNIOND}{\@ifstar\UNIONDStar\UNIONDNoStar}
\newcommand{\UNIONDStar}[2]{\ensuremath{\left(\UNIONDNoStar{#1}{#2}\right)}}
\newcommand{\UNIONDNoStar}[2]{\ensuremath{\ifthenelse{\isempty{#1}}{\cdot}{#1}\uplus\ifthenelse{\isempty{#2}}{\cdot}{#2}}}
\newcommand{\SETMINUS}{\@ifstar\SETMINUSStar\SETMINUSNoStar}
\newcommand{\SETMINUSStar}[2]{\ensuremath{\left(\SETMINUSNoStar{#1}{#2}\right)}}
\newcommand{\SETMINUSNoStar}[2]{\ensuremath{\ifthenelse{\isempty{#1}}{\cdot}{#1}\setminus\ifthenelse{\isempty{#2}}{\cdot}{#2}}}
\newcommand{\INTERSECT}{\@ifstar\INTERSECTStar\INTERSECTNoStar}
\newcommand{\INTERSECTStar}[2]{\ensuremath{\left(\INTERSECTNoStar{#1}{#2}\right)}}
\newcommand{\INTERSECTNoStar}[2]{\ensuremath{\ifthenelse{\isempty{#1}}{\cdot}{#1}\cap\ifthenelse{\isempty{#2}}{\cdot}{#2}}}
\newcommand{\CARTPROD}{\@ifstar\CARTPRODStar\CARTPRODNoStar}
\newcommand{\CARTPRODStar}[2]{\ensuremath{\left(\CARTPRODNoStar{#1}{#2}\right)}}
\newcommand{\CARTPRODNoStar}[2]{\ensuremath{\ifthenelse{\isempty{#1}}{\cdot}{#1}\times\ifthenelse{\isempty{#2}}{\cdot}{#2}}}
\newcommand{\FINCOUNT}{\@ifstar\FinCountStar\FinCountNoStar}
\newcommand{\FinCountStar}[1]{\ensuremath{\#(\ifthenelse{\isempty{#1}}{\cdot}{#1})}}
\newcommand{\FinCountNoStar}[1]{\ensuremath{\#\left(\ifthenelse{\isempty{#1}}{\cdot}{#1}\right)}}
\newcommand{\parfun}{\ensuremath{\ON{\rightharpoonup}}}
\newcommand{\fun}{\ensuremath{\ON{\rightarrow}}}
\newcommand{\SetComp}[3][]{\{#1#2#1\mid#1#3#1\}}
\newcommand{\twoup}[1]{\ensuremath{2^{#1}}}
  \newcommand{\fz}[1]{\ensuremath{h\ifthenelse{\isempty{#1}}{}{(#1)}}} 
  \newcommand{\fo}[1]{\ensuremath{g\ifthenelse{\isempty{#1}}{}{(#1)}}}  
    \newcommand{\q}{\ensuremath{x}}
  \newcommand{\Q}{\ensuremath{X}}
  \newcommand{\Qo}{\ensuremath{Q^1}}
  \newcommand{\Qz}{\ensuremath{Q^0}}
    \newcommand{\Fc}{\ensuremath{\mathcal{F}}}
    \newcommand{\FcB}{\ensuremath{\mathcal{F}^\mathtt{B}}}
    \newcommand{\FcR}{\ensuremath{\mathcal{F}^\mathtt{R}}}
    \newcommand{\FcS}{\ensuremath{\mathcal{F}^\mathtt{S}}}
\newcommand{\FcStrong}{\ensuremath{\mathcal{S}}}
\newcommand{\FcWeak}{\ensuremath{\mathcal{W}}}
  \newcommand{\OpPre}[1]{\mathop{\mathrm{pfx}(#1)}}
  \newcommand{\OpLim}[1]{\mathop{\mathrm{lim}(#1)}}
\newcommand{\OpCl}[1]{\mathop{\mathrm{clo}(#1)}}
\newcommand{\OpInf}[1]{\mathop{\mathrm{Inf}(#1)}}
    \newcommand{\Trz}{\ensuremath{\delta^0}}
  \newcommand{\Tro}{\ensuremath{\delta^1}}
  \newcommand{\Tuple}[2][]{\List[#1]{#2}{(}{)}}
  \newcommand{\Set}[2][]{\List[#1]{#2}{\{}{\}}}
   \newcommand{\Lomega}{\ensuremath{\mathcal{L}}}
   \newcommand{\Pomega}{\ensuremath{\mathcal{P}}}
 \newcommand{\Lstar}{\ensuremath{L}}  
\newcommand{\Pstar}{\ensuremath{P}}
\begin{document}
\title{Supervisory Controller Synthesis for Non-terminating Processes is an Obliging Game}

\author{Rupak Majumdar and Anne-Kathrin Schmuck
\thanks{R. Majumdar and A.-K. Schmuck are with MPI-SWS, Kaiserslautern, Germany (e-mail: \{rupak,akschmuck\}@mpi-sws.org). Authors are ordered alphabetically.}}

\maketitle

\begin{abstract}
We present a new algorithm to solve the supervisory control problem over non-terminating processes modeled
as $\omega$-regular automata.
A solution to this problem was obtained by Thistle in 1995 
which uses complex manipulations of automata. 
We show a new solution to the problem through a reduction to 
\emph{obliging games}, 
which, in turn, can be reduced to $\omega$-regular reactive synthesis. 
Therefore, our reduction results in a symbolic algorithm based on manipulating sets of states using tools from reactive synthesis.
\end{abstract}

\section{Introduction}\label{sec:intro}

\emph{Supervisory control theory} (SCT) is a branch of control theory which is concerned with the control
of discrete-event dynamical systems (DES) with respect to temporal specifications.
Given such a DES, SCT asks to synthesize a \emph{supervisor} that restricts the possible sequences of events
such that any remaining sequence fulfills a given specification.
The field of SCT was established by the seminal work of Ramadge and Wonham \cite{RW87} concerning the control of \emph{terminating processes}, 
i.e., systems whose behavior can be modeled by regular languages over finite words.
This setting is well understood and summarized in standard text books \cite{Cassandras,WonhamBook}.

Already 30 years ago, Thistle and Wonham extended the scope of SCT to \emph{non-terminating processes} \cite{TW1991}, 
i.e., to the supervision of systems whose behavior can be modeled by regular languages over \emph{infinite} words.
Non-terminating processes naturally occur in models of infinitely executing reactive systems; 
$\omega$-words allow convenient modeling of both safety \emph{and} liveness specifications for such systems. 
In a sequence of papers \cite{TW1991,TW1994a,TW1994b} culminating in \cite{thistle1995},
Thistle and Wonham laid out the foundations for SCT over non-terminating processes and showed,
in particular, an algorithm to synthesize supervisors for general $\omega$-regular specifications under general $\omega$-regular plant properties.
This symbolic synthesis algorithm involves an intricate
fixed point computation over the $\omega$-regular languages, using structural operations on finite-state automata representations.
As a direct result of the complexity of the involved operations, 
to the best of our knowledge, this most general algorithm has never been implemented. 
%

A key observation in Thistle and Wonham's original work was the relationship between SCT and Church's problem from logic \cite{Church57}, 
and hence to techniques from reactive synthesis. This arguable also influenced early works on using logical formalisms to control hybrid systems \cite{maler1995synthesis,lygeros1999controllers,davoren2000logics} which partially culminated into the now well established field of formal methods for hybrid control.
However, despite this lasting connection between hybrid control and logic, the connection between the research fields of \emph{discrete} supervisory control and reactive synthesis got mostly lost over time and is just about to be re-established. This paper continues these recent efforts \cite{EhlersJdeds, DBLP:conf/ifac/SchmuckMS20, FabianAkesson2019comparative,ciolek2019compositional} by showing a new clean algorithmic connection between SCT and recent enhancements of reactive synthesis, called \emph{obliging games}.


While, conceptually, supervisor synthesis and reactive synthesis seem very similar, the resulting algorithmic reduction is not very obvious.
To understand the source of difficulty, let us recall the setting of the problem.
We are given a finite state machine that forms the transition structure of the DES for the synthesis problem, and we are given \emph{two} $\omega$-regular languages
defined over this machine.
The first language (let's call it $A$) models \emph{assumptions} on the plant: a supervisor can assume that the (uncontrolled) plant language will satisfy this assumption.
The second language (call it $S$) provides the specification that the supervisor must uphold whenever the plant operates in accordance to the assumptions, by preventing certain controllable events over time.

One can easily transform the given finite state machine to a two-player game, as in reactive synthesis, and naively ask for a winning strategy for
the winning condition $A \Rightarrow S$, which states that if the plant satisfies its assumption, then the resulting behavior satisfies
the specification.
While this reduction seems natural, it is incorrect in the context of SCT. 
The problem is that a control strategy may ``cheat'' and enforce the above implication vacuously by actively preventing the plant from satisfying the assumption.
In SCT, such undesired solutions are ruled out by a \emph{non-conflicting} requirement: any finite word compliant with the supervisor
must be extendable to an infinite word that satisfies $A$. Hence, a non-conflicting supervisor always allows the plant
to fulfill the assumption. 
The non-conflicting requirement is not a linear property \cite{EhlersJdeds}, and cannot be \enquote{compiled away} in reactive synthesis.

The main contribution of this paper is a reduction of the supervisory control problem to a class of reactive synthesis problems called \emph{obliging games} \cite{chatterjee2010obliging} that precisely
capture a notion of non-conflicting strategies in the context of reactive synthesis.
The main result of \cite{chatterjee2010obliging} shows that obliging games can be reduced to usual reactive synthesis on a larger game.
Once the intuitive connection between supervisory control and obliging games is made, the formal reduction is almost trivial. 

We consider this simplicity as a \emph{feature} of our work: our conceptual reduction from supervisory control to obliging games, and hence to reactive synthesis,
forms a separation of concerns between (a) the modeling of specifications and non-conflicting strategies and (b) the (non-trivial, but well-understood)
algorithmics of solving games.

\smallskip
\textit{Other Related Work:}
This paper continues recent efforts in establishing a formal connection between reactive synthesis and SCT for \emph{terminating processes} \cite{EhlersJdeds, DBLP:conf/ifac/SchmuckMS20, FabianAkesson2019comparative,ciolek2019compositional} and \emph{non-terminating processes} \cite{SchmuckJdeds20}. While  \cite{SchmuckJdeds20} focusses on a language-theoretic connection, this paper establishes a connection between synthesis algorithms over automata realizations.

Within the SCT community, non-terminating processes have gained more attention in recent years, see e.g., \cite{Aucher_2014,vanHulst2017,SakakibaraUshio_2018,yang2020refinements}.
However, in all these works, the plant itself does not posses non-trivial liveness properties, which allows to transform the resulting synthesis problem to the usual setting of reactive synthesis. 
Notable exceptions are, e.g., \cite{baier2012hierarchical,Moor_Report_omegaSCT}, where synthesisis is restricted to \emph{deterministic} Büchi automata models, capturing only a strict subclass of $\omega$-regular properties.

Symbolic algorithms for GR(1) specifications satisfying a non-conflicting requirement were presented in \cite{majumdar2019environmentally}.
Their algorithm has the advantage of a ``direct'' implementation using symbolic manipulation of sets of states.
We leave as future work whether a similar direct algorithm can be designed for general obliging games.

\section{Preliminaries}\label{sec:SCT:prelim}

\noindent\textit{Formal Languages.}
Given a finite alphabet $\Sigma$, we write $\Sigma^*$, $\Sigma^+$, and $\Sigma^\omega$ for the sets of finite words, non-empty finite words,
and infinite words over $\Sigma$, and write $\Sigma^\infty = \Sigma^* \cup \Sigma^\omega$.
We call the subsets $L\subseteq\Sigma^*$ and $\mathcal L\subseteq\Sigma^\omega$ a $*$-language and an $\omega$-language over $\Sigma$, respectively.

We write $w\le v$ (resp., $w<v$) if $w$ is a prefix of $v$ (resp., a
strict prefix of $v$). 
The set of all prefixes of a word $w\in\Sigma^\infty$ is a $*$-language denoted by
$\OpPre{w}\subseteq \Sigma^*$. 
For $L\subseteq\Sigma^*$, we have $L\subseteq \OpPre{L}$. 
A $*$-language $L$ is \emph{prefix-closed} if $L = \OpPre{L}$.
The \emph{limit} $\OpLim{L}$ of a $*$-language $L$ is the $\omega$-langauge which contains all words $\alpha\in\Sigma^\omega$ which have infinitely many prefixes in $L$. We further define $\OpCl{\mathcal{L}} :=\OpLim{\OpPre{\mathcal{L}}}$ as the
\emph{topological closure} of $\mathcal{L}\subseteq\Sigma^\omega$. An $\omega$-language $\mathcal L$ is \emph{topologically-closed} if $\mathcal L=\OpCl{\mathcal{L}}$.

\smallskip
\noindent\textit{Finite State Machines.}
A \emph{finite state machine} is a tuple
$\Aut=(\Q,\,\Sigma,\,\delta,\, \q_0)$,
with \emph{state set} $\Q$,
\emph{alphabet} $\Sigma$,
\emph{initial state} $\q_0\in \Q$, 
and the partial \emph{transition function} 
$\delta : \Q\times \Sigma\parfun 2^{\Q}$.
For $\q\in\Q$ and $\sigma\in\Sigma$, we write $\delta(\q,\sigma)!$ to signify that $\delta(\q,\sigma)$ is defined. 
We call $\Aut$ \emph{deterministic} if $\delta(\q,\sigma)!$ implies $|\delta(\q,\sigma)|=1$. 
We call $\Aut$ \emph{non-blocking} if for all $\q\in\Q$ there exists at least one $\sigma\in\Sigma$ s.t.\ $\delta(\q,\sigma)!$.

A \emph{path} of $\Aut$ is a finite or infinite sequence $\pi=\q_0 \q_1 \hdots$ 
s.t.\ for all $k\in\ON{Length}(\pi)-1$ there exists some $\sigma_k\in\Sigma$ s.t.\ $\q_{k+1}\in\delta(\q_k,\sigma_k)$. 
If $\pi$ is finite, we denote by $\ON{Last}(\pi)=\q_n$ its last element. 
We collect all finite and infinite paths of the finite state machine $\Aut$ in the sets $\Pstar(\Aut)\subseteq \q_0\Q^*$ 
and  $\Pomega(\Aut)\subseteq \q_0\Q^\omega$, respectively. 
Given a string 
$s=\sigma_0\sigma_1\hdots\in\Sigma^\infty$ we say that a path $\pi$ of $\Aut$ is \emph{compliant} with 
$s$ if $\ON{Length}(s)=\ON{Length}(\pi)-1$ and for all $k\in\ON{Length}(\pi)-1$ we have $\q_{k+1}\in\delta(\q_k,\sigma_k)$. 
We define by $\ON{Paths}_{\Aut}(s)$ the set of all paths of $\Aut$ compliant with $s$. 
We collect all finite and infinite strings that are compliant with $\Aut$ in the sets 
$\Lstar(\Aut):=\SetComp{s\in\Sigma^*}{\ON{Paths}_{\Aut}(s)\neq\emptyset}$ and
$\Lomega(\Aut):=\SetComp{s\in\Sigma^\omega}{\ON{Paths}_{\Aut}(s)\neq\emptyset}$, respectively. 
 If $\Aut$ is non-blocking, we have $\LPrefix{\Lomega(\Aut)}=\Lstar(\Aut)$. 
 If $\Aut$ is deterministic we have $|\ON{Paths}_{\Aut}(s)|=1$ for all $s\in\Lstar(\Aut)$.

\smallbreak
\noindent\textit{Finite-State Automata over Finite Words.}
Deterministic finite-state automata over finite words are typically called deterministic finite automata (DFA) and are defined by a deterministic finite state machine $\Aut$ equipped with a set of final states $F\subseteq \Q$. The DFA $(\Aut,F)$ accepts (or generates) the $*$-language $L(\Aut,F)$ which contains all finite paths of $\Aut$ which are ending in $F$. A $*$-language $L$ is called \emph{regular} iff there exists a DFA $(\Aut,F)$ which accepts $L$, i.e., $L=L(\Aut,F)$.

\smallbreak
\noindent\textit{Finite-State Automata over Infinite Words.}
For a path $\pi$, define
$\OpInf{\pi} = \set{\q\in \Q \mid \q_k=\q \mbox{ for infinitely many }k\in\mathbb{N}}$
to be the set of states visited infinitely often along $\pi$.
Let $F\subseteq \Q$ be a subset of states. 
We say that an \emph{infinite} string $s\in\Sigma^\omega$ satisfies
the \emph{B\"uchi acceptance condition} $\FcB=\Set{F}$ on $\Aut$
if there exists a path $\pi\in\ON{Paths}_\Aut(s)$ such that $\OpInf{\pi}\cap F \neq \emptyset$.
Further, let $\Fc=\Set{\tuple{G_1,R_1},\hdots,\tuple{G_m,R_m}}$ be a set, where each $G_i,R_i\subseteq \Q$, $i=1,\ldots, m$, is a subset of states. 
We say that a string $s\in\Sigma^\omega$ satisfies the \emph{Rabin acceptance condition} 
$\FcR=\Fc$ on $\Aut$ if there exists a path $\pi\in\ON{Paths}(\Aut,s)$ such that 
$\OpInf{\pi}\cap G_i \neq \emptyset$ \emph{and} $\OpInf{\pi}\cap R_i = \emptyset$ for \emph{some} $i\in [1;m]$.
It satisfies the \emph{Streett acceptance condition} $\FcS=\Fc$ if
$\OpInf{\pi}\cap G_i = \emptyset$ \emph{or} $\OpInf{\pi}\cap R_i \neq \emptyset$ for \emph{all} $i\in [1;m]$.
Rabin and Streett conditions are \emph{duals}, i.e., if $\pi$ satisfies the Rabin condition $\FcR$ it violates the Streett condition $\FcS=\FcR$. 

We call a finite state machine equipped with a Büchi, Rabin or Streett acceptance condition a Büchi, Rabin or Streett automaton, respectively. We collect all infinite strings (resp.\ paths) satisfying the specified acceptance condition $\Fc$ over $\Aut$, in the accepted language $\Lomega(\Aut,\Fc)\subseteq\Sigma^\omega$ (resp.\ in the set $\Pomega(\Aut,\Fc)\subseteq \q_0\Q^\omega$). 
An $\omega$-language $\mathcal L$ is called \emph{regular} iff it is accepted by a \emph{non-deterministic} Büchi automaton. We remark that \emph{deterministic} Rabin and \emph{deterministic} Streett automata also accept precisely the set of $\omega$-regular languages. However, this is not true for \emph{deterministic} Büchi automata which are less expressive.

\section{The Supervisor Synthesis Problem}\label{sec:SCT}

We define the supervisory controller synthesis problem following
the original formulation for $*$-languages \cite{RW87} and the subsequent extension to $\omega$-languages
in \cite{TW1991,TW1994a,TW1994b,thistle1995}.

\subsection{Problem Statement}\label{sec:SCTproblem}

Let $\ASigma$ be a finite alphabet of \emph{events}. 
A \emph{plant} is a tuple $(\Lstar_P, \Lomega_P)$, where $\Lstar_P\subseteq \ASigma^*$ is a prefix-closed regular $*$-language
and $\Lomega_P\subseteq \ASigma^\omega$ s.t.\ $\LPrefix{\Lomega_P}\subseteq\Lstar_P$ is a regular $\omega$-language. If, in addition, $\LPrefix{\Lomega_P}=\Lstar_P$, the plant is called \emph{deadlock-free}.
A \emph{specification} is a tuple $(\Lstar_S, \Lomega_S)$ where $\Lomega_S \subseteq \ASigma^\omega$ is a regular $\omega$-language and $\Lstar_S:=\LPrefix{\Lomega_S}$ is a prefix-closed regular $*$-language. 
 That is, the specification (in contrast to the plant) is by definition deadlock-free. This convention is motivated by the fact that any closed-loop system should be deadlock free in order to operate correctly, independent of other properties that should be enforced.

Intuitively, the language-tuples $(\Lstar_P, \Lomega_P)$ and $(\Lstar_S, \Lomega_S)$ capture both 
 \emph{safety} and \emph{liveness} properties. Here, $(\Lstar_P, \Lomega_P)$ models the properties the \emph{uncontrolled} plant exhibits. In contrast, $(\Lstar_S, \Lomega_S)$ restricts the behavior of the plant to a set of desired behaviors which is, by definition, deadlock-free. That is, every safe event sequence generated by the plant under control must be extendable to an infinite string additionally fulfilling the imposed liveness requirements.

\begin{remark}
 In language theory, an $\omega$-language $\Lomega$ is called a \emph{safety language} if $\OpLim{\OpPre{\Lomega}}=\Lomega$ (that is, a finite prefix of a string determines containment in a language), and a \emph{liveness language} if $\OpLim{\OpPre{\Lomega}}=\Sigma^\omega$ (that is, finite prefixes do not matter for containment).
By using this classification of languages we can interpret the language tuples $(\Lstar_P, \Lomega_P)$ and $(\Lstar_S, \Lomega_S)$ from above as follows. First, we see that both $\Lstar_P$ and $\Lstar_S$ capture the safety-part of the plant and the specification, respectively, by observing that $\OpLim{\Lstar_P}$ and $\OpLim{\Lstar_S}$ are indeed safety languages. For the liveness part, the correspondence is not as straight forward. However, as we know that any regular language can be written as the intersection of a safety and a liveness language \cite{alpern1987recognizing}, there exist (pure) liveness languages $\tilde{\Lomega}_P$ and $\tilde{\Lomega}_S$ s.t.\ $\Lomega_P=\tilde{\Lomega}_P\cap\OpLim{\Lstar_P}$ and $\Lomega_S=\tilde{\Lomega}_S\cap\OpLim{\Lstar_S}=\tilde{\Lomega}_S\cap\OpLim{\OpPre{\Lomega_S}}$.
Hence, in the context of SCT, the $\omega$-languages $\Lomega_P$ and $\Lomega_S$ capture both safety and liveness properties. 

For any interesting instance of the SCT control problem over infinite strings, we require $\Lomega_P\subsetneq\OpLim{\Lstar_P}$ and $\Lomega_S\subsetneq\OpLim{\Lstar_S}$, that is, both the plant and the specification languages $\Lomega_P$ and $\Lomega_S$ do not only contain a safety property (i.e., $\OpLim{\Lstar_P}$ and $\OpLim{\Lstar_S}$) but also a non-trivial liveness property (captured by $\Lomega_P\setminus \OpLim{\Lstar_P}$ and $\Lomega_S\setminus \OpLim{\Lstar_S}$, respectively). We refer the reader to 
\cite{Moor_Report_omegaSCT} for an accessible discussion of this topic.
\end{remark}

Given the finite alphabet $\ASigma$, its subset $\ASigmac\subseteq\ASigma$ denotes all events the controller can prevent the plant from executing, while the set $\ASigmauc\subseteq\ASigma$ denotes events that cannot be prevented by the controller. We typically require that $\ASigmac$ and $\ASigmauc$ form a partition of $\ASigma$, i.e., $\ASigma= \ASigmac\dotcup\ASigmauc$.

Further, a \emph{control pattern} $\gamma$ is a subset of $\Sigma$ containing $\ASigmauc$. We collect all control pattern in the set 
 $\Gamma\D\{\,\gamma\subseteq\ASigma\,|\,\ASigmauc\subseteq\gamma\,\}$.
Given this set, a \emph{(string-based) supervisor} is defined as a map $f:\Sigma^*\rightarrow\Gamma$ that maps each (finite) past event sequence $s\in\Sigma^*$ to a control pattern $f(s)\in\Gamma$.
The control pattern specifies the set of enabled successor events after the occurrence of $s$. The definition of control patterns ensures that
uncontrollable events are always enabled.
A word $s\in\Sigma^*$ is called \emph{consistent} with $f$ if for all $\sigma\in\Sigma$ and 
$t\sigma\in\LPrefix{s}$, it holds that $\sigma\in f(t)$. 
We write $\Lstar_f$ for the set of all words consistent with $f$ and
define $\Lomega_f:=\OpLim{\Lstar_f}$.

With these definitions, the supervisor synthesis problem can be formally stated as follows.

\begin{problem}[String-Based Supervisor Synthesis]\label{prob:SCT}
 Given an alphabet $\ASigma= \ASigmac\dotcup\ASigmauc$, a plant model $(\Lstar_P,\Lomega_P)$, where $\Lomega_P\subseteq\ASigma^\omega$ and $\LPrefix{\Lomega_P}\subseteq\Lstar_P\subseteq \ASigma^*$ are regular languages, 
 and a regular specification language%
 \footnote{ As $\Lstar_S:=\LPrefix{\Lomega_S}$, the language $\Lstar_S$ is uniquely determined by $\Lomega_S$ and therefore omitted from the problem description.}
 $\Lomega_S\subseteq \ASigma^\omega$, 
 synthesize, if possible, a \emph{string-based supervisor} $f:\Sigma^*\rightarrow \Gamma$ s.t.\
\begin{subequations}\label{equ:f}
 \begin{compactenum}[(i)]
 \item the closed-loop satisfies the specification, i.e.,
 \begin{equation}\label{equ:f:contain}
  \emptyset\subsetneq\Lomega_{f}\cap\Lomega_P \subseteq \Lomega_S
 \end{equation}
 \item the plant and the supervisor are \emph{non-conflicting}, i.e.,
 \begin{equation}\label{equ:f:nonconf}
  \Lstar_f\cap\Lstar_P \subseteq \LPrefix{\Lomega_f\cap\Lomega_P},
 \end{equation}
\end{compactenum}
\end{subequations}
or determine that no such supervisor exists.
A string-based supervisor ${f}$ solves the synthesis problem over $((\Lstar_P,\Lomega_P),\Lomega_S)$
if it satisfies \eqref{equ:f:contain} and \eqref{equ:f:nonconf}.
\hfill$\triangleleft$
\end{problem}
The constraint \eqref{equ:f:nonconf} ensures that the plant is always able to generate events allowed by $f$ 
s.t.\ it ultimately generates a word in the language $\Lomega_P$. 
Then, by \eqref{equ:f:contain}, all such generated words must be contained in the specification $\Lomega_S$.

\subsection{A Special Case: Terminating Processes}
\label{sec:sct:terminating}
Given that many readers might be more familiar with the supervisory controller synthesis problem for \emph{terminating} processes, we first recall its special case of \REFproblem{prob:SCT} and a standard algorithmic solution before discussing automata realizations for solving \REFproblem{prob:SCT} for $\omega$-regular input parameters.

Within the basic setting of supervisory controller synthesis for \emph{terminating processes}, the languages $\Lomega_P$ and $\Lomega_S$ are non-prefix closed regular $*$-languages, rather than regular $\omega$-languages. They are typically called the \emph{marked} language and denoted by $\Lstar_{mP}$ and $\Lstar_{mS}$, respectively. In this setting, it is well known that the tuples $(\Lstar_P, \Lstar_{mP})$ and  $(\Lstar_S, \Lstar_{mS})$ can be represented by deterministic finite automata (DFA) denoted by $(M_P,F_P)$ and $(M_S,F_S)$, s.t.\ $\Lstar_P:=\Lstar(M_P)$, $\Lstar_{mP}=\Lstar(M_P,F_P)$, $\Lstar_S:=\Lstar(M_S)$, and $\Lstar_{mS}=\Lstar(M_S,F_S)$. I.e., $\Lstar_P$ and $\Lstar_S$ collect all \emph{finite} strings generated by $M_P$ and $M_S$, respectively, when starting form the initial state and ending in any other state $x\in X$. Similarly, $\Lstar_{mP}$ and $\Lstar_{mS}$ collect all \emph{finite} strings generated by $M_P$ and $M_S$, respectively, when starting form the initial state and ending in a \emph{marked} state. As in the $\omega$-language case, it is further assume that $(M_S,F_S)$ is non-blocking, i.e., $\Lstar_S=\LPrefix{\Lstar_{mS}}$.
In addition, one typically requires that $\Lstar_{mS}$ is closed w.r.t.\ $\Lstar_{mP}$.

In the standard version of the supervisory control problem over terminating processes (see e.g.\ \cite{CassandrasLafortune3rd}, p.184) one usually refers to the intersections $\Lstar_{f}\cap\Lstar_{P}$ and  $\Lstar_{f}\cap\Lstar_{mP}$ as the unmarked and marked language of the closed-loop (i.e., the plant under supervision), usually denoted by $\Lstar(f/P)$ and $\Lstar_m(f/P)$. Then the standard supervisory control problem\footnote{Note that the usual controllablity requirement is hidden in the definition of $f$ to always enable uncontrollable events. } asks for a supervisor $f:\Sigma^*\fun \Gamma$, s.t.\ the closed-loop is non-blocking, i.e., $\Lstar(f/P)=\LPrefix{\Lstar_m(f/P)}$ and $\LPrefix{\Lstar_m(f/P)}$ is the maximal language (with the above properties) contained in the marked specification language $\Lstar_{mS}$. 

Matching these two requirements to \REFproblem{prob:SCT} we see that \eqref{equ:f:nonconf} corresponds to the non-blocking requirement, while \eqref{equ:f:contain} requires containment in the marked specification language. The fact that the classical supervisory synthesis problem asks for a maximal solution stems from the fact that such a maximal solution does uniquely exist for \emph{terminating} processes. It is worth noting, that this is in general not true for \emph{non-terminating} processes, and hence omitted from \REFproblem{prob:SCT}.

One standard algorithmic solution to the outlined supervisor synthesis problem (see \cite{CassandrasLafortune3rd}, p.186) combines both DFA realizations $(M_P,F_P)$ and $(M_S,F_S)$ into a single DFA $(M,F)$. This process is sometimes called \enquote{plantification of the specification}.  
This construction essentially extends the DFA $(M_S,F_S)$ into a complete DFA $(\widetilde{M}_S,F_S)$ whose unmarked language is unrestricted, i.e., $\Lstar(\widetilde{M}_S)=\Sigma^*$ and then takes a normal automata product of $(M_P,F_P)$ and $(\widetilde{M}_S,F_S)$ to obtain the synthesis automaton $(M,F)$ where a state $(q,p)$ is marked (i.e., contained in $F$) if $q\in F_P$ and $p\in F_S$. 
Intuitively, solving \REFproblem{prob:SCT} reduces to strategically disabling controllable transition in $M$ s.t.\ states in $F$ always remain reachable, which ensures safety, non-blockingness and controllability. It is proven in (see \cite{CassandrasLafortune3rd}, p.186) that this construction is indeed correct and results in the desired supervisor. 

As an example, consider the DFA's $(M_P,F_P)$ and $(M_S,F_S)$ depicted in \REFfig{fig:exp:MpMs} (top) and (middle) and the completed DFA  $(\widetilde{M}_S,F_S)$ in \REFfig{fig:exp:MpMs} (bottom). Their product results in the synthesis automaton $(M,F)$ depicted in \REFfig{fig:exp:MF}. Assuming that all events are controllable, the resulting supervisor would disable events $d$ and $c$ in $s_3$.

\begin{figure}
 \begin{tikzpicture}[auto]
    \begin{footnotesize}
          \node (name) at (-2,0) {$\Aut_P$:};
          \node (init) at (2,0.3) {};
          \node[state] (p1) at (0,0) {$p_1$};
          \node[state] (p2) at (0,-1) {$p_2$};
          \node[state] (p3) at (2,-0.5) {$p_3$};
          \node[state,accepting,sws-blue] (p4) at (4,0) {$p_4$};
          \node[state,accepting,sws-blue] (p5) at (4,-1) {$p_5$};
          
          \SFSAutomatEdge{init}{}{p3}{}{}  
          \SFSAutomatEdge{p3}{a}{p2}{}{swap}
          \SFSAutomatEdge{p2}{d}{p3}{bend right}{swap}
          \SFSAutomatEdge{p3}{d}{p1}{}{swap}
          \SFSAutomatEdge{p3}{c}{p4}{}{swap}
          \SFSAutomatEdge{p4}{d}{p3}{bend right}{swap}
          \SFSAutomatEdge{p3}{b}{p5}{}{pos=0.7}
          \SFSAutomatEdge{p5}{d}{p3}{bend left}{}       
  \end{footnotesize}
 \end{tikzpicture}
 
  \begin{tikzpicture}[auto]
    \begin{footnotesize}
          \node (name) at (-2,0) {$\Aut_S$:};
          \node (init) at (2,0.3) {};
          \node[state,accepting,sws-red] (p2) at (0,-0.5) {$q_2$};
          \node[state] (p3) at (2,-0.5) {$q_3$};
          \node[state,accepting,sws-red] (p5) at (4,-0.5) {$q_5$};
          
          \SFSAutomatEdge{init}{}{p3}{}{}  
          \SFSAutomatEdge{p3}{a}{p2}{bend right}{swap}
          \SFSAutomatEdge{p2}{d}{p3}{bend right}{swap}
          \SFSAutomatEdge{p3}{b}{p5}{bend left}{}
          \SFSAutomatEdge{p5}{d}{p3}{bend left}{}  
          \SFSAutomatEdge{p3}{d}{p3}{loop below}{} 
  \end{footnotesize}
 \end{tikzpicture}

  \begin{tikzpicture}[auto]
    \begin{footnotesize}
          \node (name) at (-2,0) {$\widetilde{\Aut}_S$:};
          \node (init) at (2,0.3) {};
          \node[state,accepting,sws-red] (p2) at (0,-0.5) {$q_2$};
          \node[state] (p3) at (2,-0.5) {$q_3$};
          \node[state,accepting,sws-red] (p5) at (4,-0.5) {$q_5$};
          \node[state] (dum) at (2,-2) {$\bot$};
          
          \SFSAutomatEdge{init}{}{p3}{}{}  
          \SFSAutomatEdge{p3}{a}{p2}{bend right}{swap}
          \SFSAutomatEdge{p2}{d}{p3}{bend right}{swap}
          \SFSAutomatEdge{p3}{b}{p5}{bend left}{}
          \SFSAutomatEdge{p5}{d}{p3}{bend left}{}  
          \SFSAutomatEdge{p3}{d}{p3}{loop below}{} 
          
          \SFSAutomatEdge{p2}{a,b,c}{dum}{bend right}{sloped,swap}
          \SFSAutomatEdge{p5}{a,b,c}{dum}{bend left}{sloped,swap}
          \SFSAutomatEdge{p3}{c}{dum}{bend left}{}
          \SFSAutomatEdge{dum}{a,b,c,d}{dum}{loop below}{}
  \end{footnotesize}
 \end{tikzpicture}
 
 \caption{Example automata for the construction of synthesis automta. For terminating processes, we interpret them as DFA $(M_P,F_P)$, $(M_S,F_S)$ and $(\widetilde{M}_S,F_S)$ and for non-terminating processes we interpret them as Büchi automata $(M_P,\Fc_P)$, $(M_S,\Fc_S)$ and $(\widetilde{M}_S,\Fc_S)$ with $\Fc_i=\set{F_i},~i\in\set{P,S}$. States in the set $F_i$ are indicated by double circles.}\label{fig:exp:MpMs}
\end{figure}
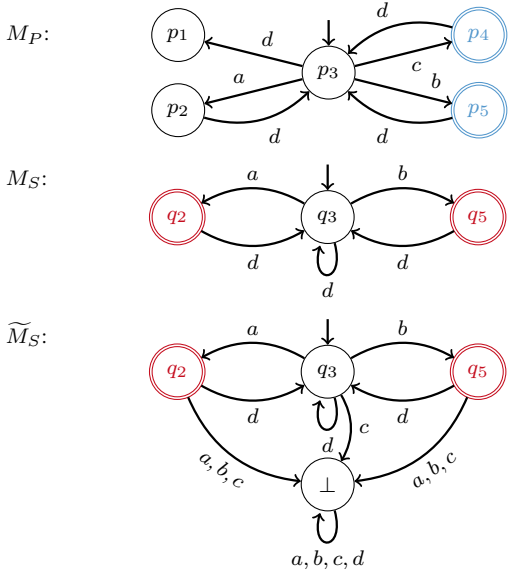

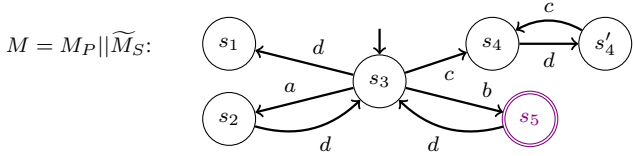
\begin{figure}
 \begin{tikzpicture}[auto]
    \begin{footnotesize}
          \node (name) at (-2,0) {$\Aut=\Aut_P||\widetilde{\Aut}_S$:};
          \node (init) at (2,0.3) {};
          \node[state] (p1) at (0,0) {$s_1$};
          \node[state] (p2) at (0,-1) {$s_2$};
          \node[state] (p3) at (2,-0.5) {$s_3$};
          \node[state] (p4) at (3.5,0) {$s_4$};
          \node[state] (p4p) at (5,0) {$s'_4$};
          \node[state,accepting,violet] (p5) at (4,-1) {$s_5$};
          
          \SFSAutomatEdge{init}{}{p3}{}{}  
          \SFSAutomatEdge{p3}{a}{p2}{}{swap}
          \SFSAutomatEdge{p2}{d}{p3}{bend right}{swap}
          \SFSAutomatEdge{p3}{d}{p1}{}{swap}
          \SFSAutomatEdge{p3}{c}{p4}{}{swap}
          \SFSAutomatEdge{p4}{d}{p4p}{}{swap}
          \SFSAutomatEdge{p4p}{c}{p4}{bend right}{swap}
          \SFSAutomatEdge{p3}{b}{p5}{}{pos=0.7}
          \SFSAutomatEdge{p5}{d}{p3}{bend left}{}       
  \end{footnotesize}
 \end{tikzpicture}
 
\caption{Synthesis automaton $(M,F)$ for the supervisory controller synthesis problem over \emph{terminating processes} depicted in \REFfig{fig:exp:MpMs}. States in $F$ are indicated by violet douple circled states. }\label{fig:exp:MF}
\end{figure}

\subsection{Automata Representations for Supervisor Synthesis}\label{sec:real}
The intuition behind the automata realization for solving \REFproblem{prob:SCT} directly carries over from terminating processes (as discussed in \REFsec{sec:sct:terminating}) to non-terminating processes. 
If $\Lomega_P$ and $\Lomega_S$ are indeed non-(topologically) closed regular $\omega$-languages, the language tuples $(\Lstar_P, \Lomega_P)$ and  $(\Lstar_S, \Lomega_S)$ are realized by 
B\"uchi, Rabin or Streett automata, instead of DFA's. I.e., as all involved languages are regular, there exist automata $(M_P,\Fc_P)$ $(M_S,\Fc_S)$ s.t.\ $\Lstar_P$ and $\Lstar_S$ collect all \emph{finite} strings compliant with $M_P$ and $M_S$, and $\Lomega_P$ and $\Lomega_S$ collect all \emph{infinite} strings that are compliant with $M_P$ and $M_S$, respectively, and, in addition, satisfy the given acceptance condition $\Fc_P$ and $\Fc_S$ over $M_P$ and $M_S$, respectively.
As any $\omega$-regular language is accepted (or generated) by a \emph{deterministic} Rabin or Streett automaton, we can, without loss of generality, follow \cite{thistle1995} and assume that $(M_P,\Fc_P)$ is a deterministic Streett automaton, while $(M_S,\Fc_S)$ is a deterministic Rabin automaton.

Similar to the $*$-language case, both $\omega$-automata can be combined to a single machine $\Aut$ while using the premisse of \REFproblem{prob:SCT} that $(M_S,\Fc_S)$ is non-blocking, i.e., $\Lstar(M_S)=\OpPre{\Lomega(M_S,\Fc_S)}$. The major difference here is that $\omega$-regular acceptance conditions do not require \enquote{synchonization}, that is, plant and specification markings need not be reached \emph{at the same time}. An infinite path over $\Aut$ can fulfill the two different acceptance conditions $\Fc_P$ and $\Fc_S$ (for example visiting a state in $\Fc_P$ \emph{and} a state in $\Fc_S$ infinitely often in the case of B\"uchi conditions) without requiring that both conditions are met \emph{at the same time}. This is the reason why SCT for \emph{non-terminating processes} does not require that the specification is closed w.r.t.\ the plant language. Further, this implies that the product of the $\omega$-automata $(\Aut_P,\Fc_P)$ and $(\Aut_S,\Fc_S)$ results in a single automaton $\Aut$ equipped with \emph{two} sets of marked states  $\Fc'_P$ and $\Fc'_S$ over $\Aut$. This is formalized in the following definition.

\begin{definition}\label{def:automataprodcut}
 Let $\Aut_P=(\Q_P,\,\Sigma,\,\delta_P,\, \q_{P,0})$ and $\Aut_S=(\Q_S,\,\Sigma,\,\delta_S,\, \q_{S,0})$ be two deterministic state machines over the same alphabet $\Sigma$ with  $\Fc_P=\Set{\tuple{G_{P,1},R_{P,1}},\hdots,\tuple{G_{mP},R_{mP}}}$ and $\Fc_S=\Set{\tuple{G_{S,1},R_{S,1}},\hdots,\tuple{G_{S,n},R_{S,n}}}$ being a Streett and a Rabin acceptance condition over $\Q_P$ and $\Q_S$, respectively. 
 Further, let $\widetilde{\Aut}_S=(\Q_S\cup\{\bot\},\,\Sigma,\,\widetilde{\delta}_S,\, \q_{S,0})$ be the completion of $\Aut_S$, that is, for all $\q\in\Q_S$ and $\sigma\in\Sigma$ holds that $\bot=\widetilde{\delta}(\q,\sigma)$ iff $\delta(\q,\sigma)$ is undefined, and $\bot=\widetilde{\delta}(\bot,\sigma)$ for all $\sigma\in\Sigma$.
 
 Then we define the extended product of $(\Aut_P,\Fc_P)$ and $(\Aut_S,\Fc_S)$ as the tuple $(\Aut,\Fc'_P,\Fc'_S)$ with
 \begin{compactitem}
  \item $\Aut=(\Q,\,\Sigma,\,\delta,\, \q_{0})$ s.t.\
 $\Q=\Q_P\times\widetilde{\Q}_S$, $q_0=(\q_{P,0},\q_{S,0})$, and 
 $(q',p')=\delta((q,p),\sigma)$ iff $q'=\delta_P(q,\sigma)$ and $p'=\widetilde{\delta}_S(p,\sigma)$,
 \item $\Fc'_P=\Set{\tuple{G'_{P,1},R'_{P,1}},\hdots,\tuple{G'_{mP},R'_{mP}}}$ s.t.\ $G'_{P,i}=\SetComp{(p,q)\in\Q}{p\in G_{P,i}}$ and $R'_{P,i}=\SetComp{(p,q)\in\Q}{p\in R_{P,i}}$, and 
  \item $\Fc'_S=\Set{\tuple{G'_{S,1},R'_{S,1}},\hdots,\tuple{G'_{mS},R'_{mS}}}$ s.t.\ $G'_{S,i}=\SetComp{(p,q)\in\Q}{q\in G_{S,i}}$ and $R'_{S,i}=\SetComp{(p,q)\in\Q}{q\in R_{S,i}}$.
 \end{compactitem}
\end{definition}

Due to the usual properties of automata completion and product, we have the following observations.

\begin{lemma}\label{lem:automataprodcut}
 Given the premisses of \REFdef{def:automataprodcut}, it holds that
 \begin{inparaenum}[(a)]
  \item $\Lstar(\Aut_P)=\Lstar(\Aut)$, 
   \item $\Lomega(\Aut_P,\Fc_P)=\Lomega(\Aut,\Fc'_P)$, 
  \item $\Lomega(\Aut_S,\Fc_S)\cap\Lomega(\Aut)=\Lomega(\Aut,\Fc'_S)$ and
  \item $\Aut$ is deterministic. 
 \end{inparaenum}
\end{lemma}

\begin{proof}
 \begin{inparaenum}[(a)]
 First, recall that completing the finite state machine $\Aut_S$ into $\widetilde{\Aut}_S$ implies $\Lstar(\widetilde{\Aut})=\Sigma^*$ and $\Lomega(\widetilde{\Aut})=\Sigma^\omega$. Further, observe that the construction of $\Aut$ is the usual product of $\Aut_P$ and $\widetilde{\Aut}_S$, implying that $\Aut$ is deterministic (i.e., (d) holds), $\Lstar(\Aut)=\Lstar(\Aut_P)\cap\Lstar(\widetilde{\Aut}_S)=\Lstar(\Aut_P)\cap\Sigma^*=\Lstar(\Aut_P)$ and $\Lomega(\Aut)=\Lomega(\Aut_P)\cap\Lomega(\widetilde{\Aut}_S)=\Lomega(\Aut_P)\cap\Sigma^\omega=\Lomega(\Aut_P)$. 
 With this (a) directly holds. Further, $\Lomega(\Aut)=\Lomega(\Aut_P)$ and the construction of $\Fc'_P$ from $\Fc_P$ implies that a string $\omega\in\Lomega(\Aut)$ fulfills $\Fc'_P$ iff it fulfills $\Fc_P$, hence (b) holds. Similarly, the construction of $\Fc'_S$ from $\Fc_S$ implies that a string $\omega\in\Lomega(\Aut)$ fulfills $\Fc'_S$ iff it fulfills $\Fc_S$, implying (c).
 \end{inparaenum}
\end{proof}

In order to use the automaton $(\Aut,\Fc'_P,\Fc'_S)$ to solve \REFproblem{prob:SCT} we additionally need that distinct transitions in $\Aut$ carry distinct labels, i.e.\ for any $\sigma,\sigma'\in\Sigma$ and $x\in X$ we have that $\delta(x,\sigma)=\delta(x,\sigma')$ implies $\sigma=\sigma'$. This can be enforced by the following construction.

\begin{definition}\label{def:distincttransition}
 Let $(\Aut,\Fc_P,\Fc_S)$ be a deterministic extended product automaton. Then we define its \emph{distinct transition version} as the tuple $(\Aut',\Fc'_P,\Fc'_S)$ s.t.\  \\
 \begin{inparaenum}[(a)]
  \item $\Aut':=(\Q\times\Sigma,\,\Sigma,\,\delta',\, \q_{0})$ with 
 $(\q',\sigma)=\delta'(\q_0,\sigma)$ iff $\q'=\delta(\q_0,\sigma)$ and for all $\overline{\sigma}\in\Sigma$ holds that
 $(\q',\sigma)=\delta'((\q,\overline{\sigma}),\sigma)$ iff $\q'=\delta(\q,\sigma)$,\\ 
 \item $\Fc'_P:=\SetComp{(\q,\sigma)\in \Q'}{\q\in\Fc_P}\cup (\set{\q_0}\cap\Fc_P)$, and \\
 \item $\Fc'_S:=\SetComp{(\q,\sigma)\in \Q'}{\q\in\Fc_S}\cup (\set{\q_0}\cap\Fc_S)$.
 \end{inparaenum}
\end{definition}

The above construction is trivially language preserving, i.e., we have the following lemma.

\begin{lemma}\label{lem:distincttransition}
 Given the premises of \REFdef{def:distincttransition}, it holds that
 \begin{inparaenum}[(a)]
  \item $\Lstar(\Aut)=\Lstar(\Aut')$, 
   \item $\Lomega(\Aut,\Fc_P)=\Lomega(\Aut',\Fc'_P)$, 
  \item $\Lomega(\Aut,\Fc_S)=\Lomega(\Aut',\Fc'_S)$,
  \item $\Aut'$ is deterministic and 
  \item distinct transitions in $\Aut'$ carry distinct labels.
 \end{inparaenum}
\end{lemma}

\begin{proof}
 We first note that determinicity of $\Aut'$ follows from the fact that $\Aut$ is deterministic and we are only further splitting transitions, and not merging them. In addition $\Aut'$ has a single initial state. As both $\Aut$ and $\Aut'$ are deterministic, we further see that for every finite string $\alpha\in\Sigma^*$ a unique state is reached s.t.\ $\delta(\q_0,\alpha)=\q$ iff $\delta'(\q_0,\alpha)=(\q,\ON{last}(\alpha))$. This immediately shows (a) and implies from the definition of $\Fc'_P$ and $\Fc'_S$ from $\Fc_P$ and $\Fc_S$ that (b) and (c) also hold. (e) immediately follows from the construction.
\end{proof}

Summarizing the discussion above and recalling that any regular $\omega$-language is realizabel by a deterministic Streett or Rabin automaton, we see that every input $((\Lstar_P,\Lomega_P),\Lomega_S)$ to \REFproblem{prob:SCT} can be realized by a so called \emph{Streett/Rabin supervisor synthesis automaton} $(\Aut,\FcS_P,\FcR_S)$. This is summarized in the following proposition.

\begin{proposition}\label{prop:StreetRabinAut}
Let $\Lstar_P\subseteq \Sigma^*$, $\Lomega_P,\Lomega_S\subseteq \Sigma^\omega$ be regular languages and $((\Lstar_P,\Lomega_P), \Lomega_S)$ an input to \REFproblem{prob:SCT}. Then there exists a finite state machine $\Aut=(\Q,\,\Sigma,\,\delta,\, \q_0)$, a Street condition $\FcS_P$ over $\Aut$ and a Rabin condition $\FcR_S$ over $\Aut$, s.t.\ 
  \begin{inparaenum}[(a)]
   \item $\Lstar_P=\Lstar(\Aut)$, 
   \item $\Lomega_P=\Lomega(\Aut,\FcS_P)$,  
  \item $\Lomega_S\cap\Lomega(\Aut)=\Lomega(\Aut,\FcR_S)$,
  \item $\Aut$ is deterministic, and
  \item distinct transitions in $\Aut$ carry distinct labels, i.e.\ for any $\sigma,\sigma'\in\Sigma$ and $x\in X$ we have that $\delta(x,\sigma)=\delta(x,\sigma')$ implies $\sigma=\sigma'$.
 \end{inparaenum}
\end{proposition}

\begin{proof}
 Let $((\Lstar_P,\Lomega_P),\Lomega_S)$ be an arbitrary regular input to \REFproblem{prob:SCT} over the alphabet $\Sigma$. As all languages are regular, there exists a \emph{deterministic} Streett automaton $(\Aut_P,\Fc_P)$ and a \emph{deterministic} Rabin automaton $(\Aut_S,\Fc_S)$ s.t.\ 
 $\Lstar_P=\Lstar(\Aut_P)$, $\Lomega_P=\Lomega(\Aut_P,\Fc_P)$, $\Lstar_S:=\OpPre{\Lomega_S}=\Lstar(\Aut_S)$, $\Lomega_S=\Lomega(\Aut_S,\Fc_S)$. Now first applying \REFdef{def:automataprodcut} to $(\Aut_P,\Fc_P)$ and $(\Aut_S,\Fc_S)$ yields the extended product automaton $(\Aut^\times,\Fc^\times_P,\Fc^\times_S)$. Applying \REFdef{def:distincttransition} to $(\Aut^\times,\Fc^\times_P,\Fc^\times_S)$ yields its distrinct transition version $(\Aut,\FcS_P,\FcR_S)$. With this, conditions (a)-(e) in \REFprop{prop:StreetRabinAut} immediately follow from \REFlem{lem:automataprodcut} and \REFlem{lem:distincttransition}.
\end{proof}

\begin{definition}\label{def:StreetRabinAut}
  If conditions (a)-(e) in \REFprop{prop:StreetRabinAut} are fulfilled, we call the tuple $(\Aut,\FcS_P,\FcR_S)$ the \emph{Streett/Rabin supervisor synthesis automaton} realizing $((\Lstar_P,\Lomega_P),\Lomega_S)$.
\end{definition}

To illustrate the similarities and differences of the Streett/Rabin supervisor synthesis automaton with the synthesis automaton for terminating processes, we revisit the previous example and interpret the automata in \REFfig{fig:exp:MpMs} as $\omega$-automata. As a deterministic Büchi automaton is a special case of both a Streett and a Rabin automaton\footnote{Let $(M,\set{F})$ be a B\"uchi automaton. Then this automaton is equivalent to the Rabin automaton $(M,\tuple{(F,\emptyset)})$ and the Streett automaton $(M,\tuple{(\Q,F)})$.}, we simply interpret $(M_P,\Fc_P)$, $(M_S,\Fc_S)$ and $(\widetilde{M}_S,\Fc_S)$ as Büchi automata with $\Fc_i=\set{F_i},~i\in\set{P,S}$. Now following the construction in \REFdef{def:automataprodcut} we obtain the Streett/Rabin supervisor synthesis automaton depicted in \REFfig{fig:examle:MPS} which already has distinct transitions. We see that the  Streett/Rabin supervisor synthesis automaton in \REFfig{fig:examle:MPS} has both deadlocks (see state $s_1$) and livelocks (the loop between $s_4$ and $s'_4$ which does not allow to reach a specification marking infinitely often) which need to be prevented by the supervisor to imply safety of the closed loop. To additionally fulfill the liveness constrains imposed by the specification, the supervisor needs to make sure that every 
 infinite trace of the controlled system visits red/violet states infinitely often if it visits blue/violet states infinitely often. For this example this trivially holds for every safe path. We will see in \REFsec{sec:example} that this is usually not the case.

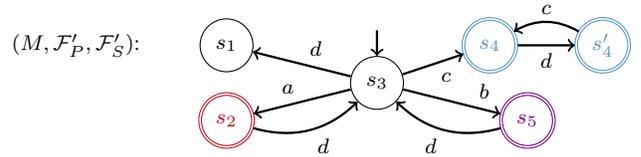
\begin{figure}
 \begin{tikzpicture}[auto]
    \begin{footnotesize}
          \node (name) at (-2,0) {$(\Aut,\Fc'_P,\Fc'_S)$:};
          \node (init) at (2,0.3) {};
          \node[state] (p1) at (0,0) {$s_1$};
          \node[state,accepting,sws-red] (p2) at (0,-1) {$s_2$};
          \node[state] (p3) at (2,-0.5) {$s_3$};
          \node[state,accepting,sws-blue] (p4) at (3.5,0) {$s_4$};
          \node[state,accepting,sws-blue] (p4p) at (5,0) {$s'_4$};
          \node[state,accepting,violet] (p5) at (4,-1) {$s_5$};
          
          \SFSAutomatEdge{init}{}{p3}{}{}  
          \SFSAutomatEdge{p3}{a}{p2}{}{swap}
          \SFSAutomatEdge{p2}{d}{p3}{bend right}{swap}
          \SFSAutomatEdge{p3}{d}{p1}{}{swap}
          \SFSAutomatEdge{p3}{c}{p4}{}{swap}
          \SFSAutomatEdge{p4}{d}{p4p}{}{swap}
          \SFSAutomatEdge{p4p}{c}{p4}{bend right}{swap}
          \SFSAutomatEdge{p3}{b}{p5}{}{pos=0.7}
          \SFSAutomatEdge{p5}{d}{p3}{bend left}{}       
  \end{footnotesize}
 \end{tikzpicture}
 
 \caption{Streett/Rabin supervisor synthesis automaton $(\Aut,\Fc'_P,\Fc'_S)$ resulting from the Büchi automata $(M_P,\Fc_P)$, $(M_S,\Fc_S)$ and $(\widetilde{M}_S,\Fc_S)$ depicted in \REFfig{fig:exp:MpMs} with $\Fc'_P=\set{s_4,s_4',s_5}$ (blue,violet) and $\Fc'_S=\set{s_2,s_5}$ (red,violet).}\label{fig:examle:MPS}
\end{figure}

\subsection{Path-Based Supervisor Synthesis}

We now formalize the outlined connection between effective algorithms to solve \REFproblem{prob:SCT} via a Streett/Rabin supervisor synthesis automaton $(\Aut,\FcS_P,\FcR_S)$ realizing $((\Lstar_P,\Lomega_P),\Lomega_S)$, and \REFproblem{prob:SCT}.
This is done via a re-formulation of \REFproblem{prob:SCT} in terms of $(\Aut,\FcS_P,\FcR_S)$ and so called 
\emph{path-based} supervisors which is proven to be equivalent to \REFproblem{prob:SCT}. 

To this end, we define a \emph{path-based supervisor} as a map $\check{f}:\Pstar(\Aut)\fun \Gamma$.
A path $\pi$ over $\Aut$ is called \emph{consistent} with $\check{f}$ 
if for all $\q,\q'\in \Q$ and $\nu\in\Q^*$ s.t.\ $\nu \q \q'\in\LPrefix{\pi}$, there exists an event $\sigma\in \check{f}(\nu \q)$ 
s.t.\ $\q'=\delta(\q,\sigma)$. 
Let $\Pstar(\Aut,\check{f})$ be the set of all paths of $\Aut$ consistent with $\check{f}$ and 
define $\Pomega(\Aut,\check{f}):=\OpLim{\Pstar(\Aut,\check{f})}$. 
We can now re-state \REFproblem{prob:SCT} into the following path-based supervisor synthesis problem.

\begin{problem}[Path-based Supervision]\label{prob:SCT_P}
Given a Streett/Rabin supervisor synthesis automaton $(\Aut,\FcS_P,\FcR_S)$, synthesize, if possible,
a path-based supervisor $\check{f}:\Pstar(\Aut)\rightarrow \Gamma$ s.t.\
\begin{subequations}\label{equ:Pf}
\begin{align}
 &\emptyset\subsetneq\Pomega(\Aut,\check{f})\cap \Pomega(\Aut,\FcS_P)\subseteq\Pomega(\Aut,\FcR_S),~\text{and}\label{equ:Pf:a}\\
 &\Pstar(\Aut,\check{f})\subseteq\LPrefix{\Pomega(\Aut,\check{f})\cap\Pomega(\Aut,\FcS_P)},\label{equ:Pf:b}
\end{align}
\end{subequations}
or determine that no such supervisor exists.
A path-based supervisor $\check{f}$ solves the synthesis problem over  $(\Aut,\FcS_P,\FcR_S)$
if it satisfies \eqref{equ:Pf:a} and \eqref{equ:Pf:b}.
\end{problem}

The structure of the finite state machine $\Aut$ ensures that there is a one-to-one correspondence between a word in $\Lstar_P=\Lstar(\Aut)$ 
and its unique path $\pi=\ON{Paths}_{\Aut}(s)$ over $\Aut$. 
Further, as transition labels are unique in $\Aut$ (see \REFprop{prop:StreetRabinAut} (e)), there is 
also a unique word $s$ associated with a path $\pi$ over $\Aut$. 
With these observations, we can show that \REFproblem{prob:SCT} and \ref{prob:SCT_P} are indeed equivalent, 
as summarized by \REFthm{thm:SCTvsSCTP}.

\begin{theorem}\label{thm:SCTvsSCTP}
Let $\Lstar_P\subseteq \Sigma^*$, $\Lomega_P,\Lomega_S\subseteq \Sigma^\omega$ be regular languages.
Let $(\Aut,\FcS_P,\FcR_S)$ be a realizing Streett/Rabin supervisor synthesis automaton for the input $((\Lstar_P,\Lomega_P), \Lomega_S)$ to \REFproblem{prob:SCT}. Further, let $f:\Sigma^*\rightarrow \Gamma$ and $\check{f}:\Pstar(\Aut)\rightarrow \Gamma$ be a string- and a path-based supervisor, respectively, s.t.\
 \begin{equation}\label{equ:fvscheckf}
  \AllQ{s\in\Lstar(\Aut)}{f(s)=\check{f}(\ON{Paths}_{\Aut}(s))}.
 \end{equation}
Then $f$ solves the synthesis problem over $((\Lstar_P,\Lomega_P),\Lomega_S)$
iff 
$\check{f}$ solves the synthesis problem over $(\Aut,\FcS_P,\FcR_S)$.

\end{theorem}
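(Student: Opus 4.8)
The plan is to exploit the tight structural correspondence between words and paths granted by conditions (d) and (e) of \REFprop{prop:StreetRabinAut}, and then to check that this correspondence transports each set-theoretic constraint of \REFproblem{prob:SCT} onto its counterpart in \REFproblem{prob:SCT_P}. Since the hypothesis \eqref{equ:fvscheckf} ties $f$ and $\check f$ pointwise along this correspondence, the equivalence should follow by showing that a single bijection between compliant words and paths carries $\Lstar_f$, $\Lomega_f$, $\Lomega_P$, and $\Lomega_S$ to $\Pstar(\Aut,\check f)$, $\Pomega(\Aut,\check f)$, $\Pomega(\Aut,\FcS_P)$, and $\Pomega(\Aut,\FcR_S)$, respectively.

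First I would make the bijection explicit. Because $\Aut$ is deterministic (condition (d)), for every $s\in\Lstar(\Aut)$ the set $\ON{Paths}_{\Aut}(s)$ is a singleton, so I may regard $\Phi(s):=\ON{Paths}_{\Aut}(s)$ as a single path and obtain a map $\Phi:\Lstar(\Aut)\to\Pstar(\Aut)$. Condition (e) (distinct transitions carry distinct labels) guarantees that a path $\q_0\q_1\hdots$ determines its generating event sequence uniquely, so $\Phi$ is a bijection; the same argument yields a bijection $\Phi_\omega:\Lomega(\Aut)\to\Pomega(\Aut)$ on infinite objects. Crucially, $\Phi$ commutes with taking prefixes, since a prefix of $s$ corresponds exactly to the matching prefix of $\Phi(s)$, and hence $\Phi_\omega$ commutes with $\OpLim{\cdot}$ and with $\LPrefix{\cdot}$ applied to subsets of $\Lomega(\Aut)$ (all prefixes involved are compliant, so they stay in the domain of $\Phi$).

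Next I would establish three correspondences. (i) \emph{Consistency}: using \eqref{equ:fvscheckf}, I would show that a word $s\in\Lstar(\Aut)$ is consistent with $f$ iff $\Phi(s)$ is consistent with $\check f$, by unfolding the two consistency definitions prefix-by-prefix and using determinism to identify, for each transition $\q\to\q'$ along the path, the unique event $\sigma$ with $\q'=\delta(\q,\sigma)$. This gives $\Phi(\Lstar_f\cap\Lstar(\Aut))=\Pstar(\Aut,\check f)$, and taking limits yields $\Phi_\omega(\Lomega_f\cap\Lomega(\Aut))=\Pomega(\Aut,\check f)$. (ii) \emph{Plant}: conditions (a)--(b) give $\Lstar_P=\Lstar(\Aut)$ and $\Lomega_P=\Lomega(\Aut,\FcS_P)\subseteq\Lomega(\Aut)$, hence $\Phi_\omega(\Lomega_P)=\Pomega(\Aut,\FcS_P)$. (iii) \emph{Specification}: condition (c) gives $\Phi_\omega(\Lomega_S)=\Pomega(\Aut,\FcR_S)$. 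Because $\Lomega_P\subseteq\Lomega(\Aut)$ and $\Lstar_P=\Lstar(\Aut)$, every set appearing in \eqref{equ:f:contain} and \eqref{equ:f:nonconf} already lives inside the domain on which $\Phi$ and $\Phi_\omega$ act bijectively, so restricting to compliant behavior is harmless. Applying $\Phi_\omega$ to \eqref{equ:f:contain} turns $\emptyset\subsetneq\Lomega_f\cap\Lomega_P\subseteq\Lomega_S$ into \eqref{equ:Pf:a}, and applying $\Phi$ to \eqref{equ:f:nonconf}, together with its commutation with $\LPrefix{\cdot}$, turns $\Lstar_f\cap\Lstar_P\subseteq\LPrefix{\Lomega_f\cap\Lomega_P}$ into \eqref{equ:Pf:b}. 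Since a bijection preserves both emptiness and inclusions in both directions, each constraint holds for $f$ iff the corresponding one holds for $\check f$, giving the claimed equivalence.

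I expect the delicate step to be the consistency correspondence (i): the string-based notion universally quantifies over prefixes $t\sigma$ and events, whereas the path-based notion quantifies over state-triples $\nu\q\q'$ and \emph{existentially} produces an enabling event. Reconciling the two requires careful use of determinism (so that $\q'$ fixes at most one $\sigma$) and of label-uniqueness (e), after which one must verify that intersection with $\Lstar(\Aut)$, the limit operator, and the prefix operator all interact cleanly, so that behavior outside the plant never interferes with the translation.
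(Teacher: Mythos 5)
Your proposal is correct and takes essentially the same approach as the paper: both proofs rest on the word--path bijection induced by determinism (condition (d)) and label-uniqueness (condition (e)) of \REFprop{prop:StreetRabinAut}, transporting consistency, the acceptance conditions, prefixes, and limits along it---the paper argues element-wise for one direction and appeals to symmetry for the other, while you package the correspondence as an explicit bijection $\Phi$ whose set-level transport of \eqref{equ:f:contain} and \eqref{equ:f:nonconf} yields both directions simultaneously. The delicate step you flag (reconciling the universal quantification over prefixes $t\sigma$ in string-based consistency with the existential event quantification in path-based consistency, via (d) and (e)) is exactly the point the paper's proof uses implicitly, so your more explicit treatment is a sound, slightly more systematic rendering of the same argument.
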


\begin{proof}
We show both directions separately.\\
\begin{inparaitem}[$\blacktriangleright$]
 \item \enquote{$\Rightarrow$}:
 Fix $f$ s.t.\ \eqref{equ:f} holds, and $\check{f}$ s.t. \eqref{equ:fvscheckf} holds. \\
\begin{inparaitem}[$\triangleright$]
 \item We first show that $\Pomega(\Aut,\check{f})\cap \Pomega(\Aut,\FcS_P)\neq\emptyset$. To this end, recall that \eqref{equ:f:contain} holds, i.e., $\Lomega_f\cap\Lomega_P\neq\emptyset$. This implies that there exists $s\in\Lomega_P$ s.t.\ for all $\sigma\in\Sigma$ and $t\sigma\in\LPrefix{s}$ holds that $\sigma\in f(t)$. As $\Aut$ is deterministic, we have $|\ON{Paths}_{\Aut}(s)|=1$ and define $\pi:=\ON{Paths}_{\Aut}(s)$. With $\Lomega_P=\Lomega(\Aut,\FcS_P)$ we have $\pi\in\Pomega(\Aut,\FcS_P)$. As $\Aut$ is deterministic, we have $\ON{Paths}_{\Aut}(t\sigma)\in\LPrefix{\pi}\in\Pstar(\Aut)$ for all $t\sigma\in\LPrefix{s}$. This implies $\pi\in\Pomega(\Aut,\check{f})$ from \eqref{equ:fvscheckf}, i.e.,  $\pi\in\Pomega(\Aut,\check{f})\cap \Pomega(\Aut,\FcS_P)\neq\emptyset$.\\
  \item Show \eqref{equ:Pf:a}: Now we pick any $\pi\in\Pomega(\Aut,\check{f})\cap \Pomega(\Aut,\FcS_P)$ and show that $\pi\in\Pomega(\Aut,\FcR_S)$. First, as $M$ is deterministic and has distinct transition labels, we have $|\ON{Paths}^{-1}_{\Aut}(\pi)|=1$ and define $s:=\ON{Paths}^{-1}_{\Aut}(\pi)$. As $\pi\in\Pomega(\Aut,\FcS_P)$ we have $s\in\Lomega(\Aut,\FcS_P)$. As $\Lomega_P=\Lomega(\Aut,\FcS_P)$  (from \REFprop{prop:StreetRabinAut} (b)) we have $s\in\Lomega_P$. 
  Further, as $\pi\in\Pomega(\Aut,\check{f})$ we know that for any $\q,\q'\in \Q$ and $\nu\in\Q^*$ s.t.\ $\nu \q \q'\in\LPrefix{\pi}$, there exists an event $\sigma\in \check{f}(\nu \q)$ s.t.\ $\q'=\delta(\q,\sigma)$. Now it follows that indeed $s'=\ON{Paths}^{-1}_{\Aut}(\nu\q)$ is uniquely defined and $s'\sigma\in \LPrefix{s}$ by definition. Now it follows from  \eqref{equ:fvscheckf} that $\sigma\in f(s')$ which implies $s\in\Lomega_f$. I.e., we have $s\in\Lomega_P\cap\Lomega_f$. 
  As \eqref{equ:f:contain} holds we have $s\in\Lomega_P$ and $s\in\Lomega_S$. Now recall from \REFprop{prop:StreetRabinAut} (a) that $\Lomega_P\subseteq \Lomega(M)$ and therefore $s\in\Lomega(M)\cap\Lomega_S$.
 Now it follows from \REFprop{prop:StreetRabinAut} (c) that $\Lomega_S\cap\Lomega(M)=\Lomega(\Aut,\FcR_S)$. This implies $\pi \in \Pomega(\Aut,\FcR_S)$.\\
 \item Show \eqref{equ:Pf:b}: Pick $\nu\in\Pstar(\Aut,\check{f})$ and observe that this implies $\nu\in\Pstar(\Aut)$. As $\Lstar_P=\Lstar(\Aut)$ (from \REFprop{prop:StreetRabinAut} (a)) this implies that $t=\ON{Paths}^{-1}_{\Aut}(\nu)\in\Lstar_P$. Further, it follows from the same reasoning as before that $t\in\Lstar_f$. As \eqref{equ:f:nonconf} holds, this implies $t\in\LPrefix{\Lomega_f\cap\Lomega_P}$ and thereby $\nu\in\LPrefix{\Pomega(\Aut,\check{f})\cap\Pomega(\Aut,\FcS_P)}$.
\end{inparaitem}\\
 \item \enquote{$\Leftarrow$}:
  Fix $\check{f}$ s.t.\ \eqref{equ:Pf} holds, and $f$ s.t. \eqref{equ:fvscheckf} holds. \\
 \begin{inparaitem}[$\triangleright$]
 \item We first show that $\Lomega_f\cap\Lomega_P\neq\emptyset$.
 To this end, recall that \eqref{equ:Pf:a} holds, i.e., $\emptyset\subsetneq\Pomega(\Aut,\check{f})\cap \Pomega(\Aut,\FcS_P)$. Hence, there exists a path $\pi\in\Pomega(\Aut,\FcS_P)$ s.t.\ also $\pi\in\Pomega(\Aut,\check{f})$.
 Now it follows form the same reasoning as in the proof of \eqref{equ:Pf:a} (in \enquote{$\Rightarrow$} above) that for $s:=\ON{Paths}^{-1}_{\Aut}(\pi)$ holds $s\in\Lomega_P\cap\Lomega_f\neq\emptyset$.\\
 \item Show \eqref{equ:f:contain}: Now pick any $s\in\Lomega_P\cap\Lomega_f$ and show $s\in\Lomega_S$. It again follows from the same reasoning as before (see first item in \enquote{$\Rightarrow$} above) that for $\pi:=\ON{Paths}_{\Aut}(s)$ holds $\pi\in\Pomega(\Aut,\check{f})\cap \Pomega(\Aut,\FcS_P)$. As \eqref{equ:Pf:a} holds, this implies $\pi\in\Pomega(\Aut,\FcR_S)$ and therefore $s\in\Lomega(\Aut,\FcR_S)$. Then it follows from condition (c) in \REFprop{prop:StreetRabinAut} that $s\in\Lomega_S$.\\ 
 \item Show \eqref{equ:f:nonconf}:
 Pick $t\in\Lstar_f\cap\Lstar_P$ and define $\nu=\ON{Paths}^{-1}(t)$. Then it follows from the same reasoning as before that $t\in\Lstar_f$ implies $\nu\in\Pstar(\Aut,\check{f})$. As \eqref{equ:Pf:b} holds, this implies $\nu\in\LPrefix{\Pomega(\Aut,\check{f})\cap\Pomega(\Aut,\FcS_P)}$
 and thereby $t\in\LPrefix{\Lomega_f\cap\Lomega_P}$.
 \end{inparaitem}
\end{inparaitem}
\end{proof}

Algorithms solving various versions of \REFproblem{prob:SCT_P} are studied by Thistle and Wonham in \cite{TW1991,TW1994a,TW1994b,thistle1995}. 
All of them are initialized with a deterministic finite-state machine $\Aut$ equipped with two acceptance conditions $\Fc_P$ and $\Fc_S$
where $\Fc_S$ is a Rabin condition.
However, $\Fc_P$ is chosen to be trivial in \cite{TW1991} (i.e., $\Lomega_P = \ASigma^\omega$), a deterministic Büchi condition in \cite{TW1994a,TW1994b}, and a deterministic Streett condition in \cite{thistle1995}. 
I.e., the algorithm in \cite{thistle1995} solves \REFproblem{prob:SCT_P} but does not allow for a direct symbolic implementation.
In the remaining sections of this paper, we show an alternative way to solve \REFproblem{prob:SCT_P} which establishes a new connection between the fields of supervisory control and reactive synthesis. This allows to utilize symbolic algorithms from reactive synthesis to solve \REFproblem{prob:SCT_P}.

\begin{remark}
We remark that \REFthm{thm:SCTvsSCTP} similarly holds for the $*$-language case. Here, the DFA $(M,F)$ captures the synthesis problem for the $*$-language tuples $(\Lstar_P, \Lomega_P)$ and  $(\Lstar_S, \Lomega_S)$ as discussed in  \REFsec{sec:real}. In the classical supervisor synthesis procedure (see e.g.\ \cite{Cassandras}), this synthesis automaton $(M,F)$ is manipulated in various steps to compute a supervisor $f$ solving \REFproblem{prob:SCT}. 
As this synthesis procedure takes an automaton as input and computes a path-based supervisor as an output, it indeed solves \REFproblem{prob:SCT_P} (with input $(M,F)$). 
Hence, this algorithm only provides a solution to \REFproblem{prob:SCT} if  \REFproblem{prob:SCT} and \REFproblem{prob:SCT_P} are indeed equivalent for every regular input, which is known to be true (see e.g., \cite{Cassandras,DBLP:conf/ifac/SchmuckMS20} for details).

We want to point out that solving \REFproblem{prob:SCT_P} with $\omega$-regular parameters is much more difficult then solving the (classical) $*$-language version. This is due to the fact that the $\omega$-regular winning conditions of the plant and the specification (captured by $\FcS_P$ and $\FcR_S$, respectively), do not need to be fulfilled simultaneously. 
\end{remark}

\subsection{Another Simple Example}\label{sec:example}
Consider the finite state machine $\Aut$ depicted in \REFfig{fig:M} for a path-based supervisor synthesis problem. Here, the alphabet is $\ASigma = \set{a, b, c}$, partitioned in $\ASigmac =\set{a,b}$ (indicated by a tick on the
corresponding edges in \REFfig{fig:M}) and $\ASigmauc = \set{c}$.
The \emph{uncontrolled} plant is assumed to only generate traces allowed in $\Aut$ (safety) and to additionally visit the state $p_2$ always again (liveness).
The latter is modelled by a B\"uchi acceptance condition
$\FcB_P= \set{p_2}$, and indicated
by the light-blue double circle around state $p_2$ in \REFfig{fig:M}. The Büchi condition $\FcB_P$ can be equivalently formulated as the Streett condition $\FcS_P=\set{\tuple{\set{p_0,p_1,p_2},\set{p_2}}}$.

The specification requires that the \emph{controlled} plant should visit  
state $p_1$ always again (liveness) and does not dead-lock (safety). 
This can be modelled by a B\"uchi condition with $\FcB_S = \set{p_1}$, indicated by the red
double circle around $p_1$ in \REFfig{fig:M}. Again, we can equivalently represent $\FcB_S$ as the Rabin condition $\FcR_S=\set{\tuple{\set{p_1},\emptyset}}$. 
In order to achieve the desired specification, the supervisor can only disable controllable actions; thus, every control pattern allows $c$.

The supervisor synthesis problem, \REFproblem{prob:SCT_P}, now asks to synthesize a path-based supervisor that ensures  
(i) whenever $p_2$ is always visited again, also $p_1$ is always visited again,
and that 
(ii) the controller never prevents the plant from visiting $p_2$ again in the future.
A path-based supervisor solving this problem is given by the following rule: any path ending in $p_0$ is mapped to $\set{a,c}$, any path ending in $p_1$ is mapped to $\set{b, c}$,
and any path ending in $p_2$ is mapped to $\set{a, c}$. This effectively disables the self-loop on event $b$ in state $p_2$. 

Note that this solution is not unique: for each $n\geq 0$, a supervisor could map paths ending in $p_0$ and $p_1$
as before, but map paths ending in $p_2$ to $\set{a,b,c}$ if the number of visits to $p_2$ is less than $n$
and to $\set{a,c}$ otherwise. In this case, the supervisor would allow the self loop on $b$ in $p_2$ to be taken $n$ number of times. A deterministic supervisor must however decide on a fixed $n$ after which $b$ is disabled in $p_2$, as otherwise the specification might not be fulfilled on a path fulfilling the plants' liveness assumption.

\begin{remark}
 The above example demonstrates the well-known fact that for $\omega$-languages, there may not exist a 
\emph{maximally permissive} supervisor, i.e., a supervisor $f$ solving \REFproblem{prob:SCT} s.t.\ $\Lomega_{f'}\cap\Lomega_P\subseteq\Lomega_f\cap\Lomega_P$ for all other supervisors $f'$ solving \REFproblem{prob:SCT}.
In the above example, the maximal permissive supervisor would need to \enquote{eventually} disable $b$ which cannot be modeled by a supervisor mapping from \emph{finite} past strings to control patters. 
The induced language $\Lomega_f$ of $f$ is always topologically closed, disallowing the introduction of new liveness properties.
This situation is in contrast to supervisor synthesis for $*$-languages where maximally permissive solutions to \REFproblem{prob:SCT} always exist.
\end{remark}

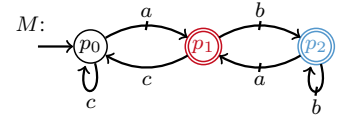
\begin{figure}[t]
\begin{center}
 \begin{tikzpicture}[auto,scale=1]
  \begin{footnotesize}
       \node (name) at (-0.8,0.3) {$M$:};
          \node (init) at (-0.8,0) {};
          \node[mystate] (p0) at (0,0) {$p_0$};
          \node[mystate,accepting,sws-red] (p1) at (1.5,0) {$p_1$};
          \node[mystate,accepting,sws-blue] (p2) at (3,0) {$p_2$};
          
          
\SFSAutomatEdge{init}{}{p0}{}{}  
\SFSAutomatEdgeCrossed{p0}{a}{p1}{bend left}{}
\SFSAutomatEdge{p0}{c}{p0}{loop below}{}
\SFSAutomatEdge{p1}{c}{p0}{bend left}{}
\SFSAutomatEdgeCrossed{p1}{b}{p2}{bend left}{}
\SFSAutomatEdgeCrossed{p2}{b}{p2}{loop below}{}
\SFSAutomatEdgeCrossed{p2}{a}{p1}{bend left}{}

  \end{footnotesize}
\end{tikzpicture}
 \vspace{-0.6cm}
 \end{center}
 \caption{Finite state machine $\Aut$ for example in \REFsec{sec:example}. Plant and specification markings indicated in blue ($p_2$) and red ($p_1$), respectively. Controllable events $\set{a,b}$ indicated by a ticked transition. Event $c$ is uncontrollable.}\label{fig:M}
  \vspace{-0.2cm}
\end{figure}

\section{From Supervisor Synthesis to Games}\label{sec:games}

We shall reduce \REFproblem{prob:SCT_P} to solving a class of two-player games
on graphs with $\omega$-regular winning conditions.

\subsection{Two-Player Games}\label{sec:OG:prelim}

A \emph{two-player game graph} $\GG = \Tuple{\Qz,\Qo, \Trz,\Tro,q_{init}}$ consists of two finite disjoint state sets 
$\Qz$ and $\Qo$, two transition functions $\Trz: \Qz \rightarrow \twoup{\Qo}$ and $\Tro: \Qo \rightarrow \twoup{\Qz}$,
and an initial state $q_{init}\in\Qz$. We write $Q = \Qz\cup\Qo$.
Given a game graph $\GG$, a \emph{strategy} for player $0$ is a function $\fz{}: q_{init}(\Qo\Qz)^*\rightarrow \Qo$. 
The sequence $\rho\in Q^\infty$ is called a play over $\GG$ if $\rho(0)=q_{init}$ and for all $k\in\ON{Length}(\rho)-1$,
we have $\rho(k+1)\in\Trz(\rho(k))$ if $\ON{Last}(\rho)\in\Qz$ and $\rho(k+1)\in\Tro(\rho(k))$ otherwise. 
The play $\rho$ is \emph{compliant} with $\fz{}$ if additionally 
$\fz{}(\rho|_{[0,k]}) = \rho(k+1)$ if $\ON{Last}(\rho)\in\Qz$. 
We denote by $\Pstar(\GG,\fz{})$ and $\Pomega(\GG,\fz{})$ 
the set of finite and infinite plays over $\GG$ compliant with $\fz{}$.

We define $\omega$-regular winning conditions for two-player games.
These are specified analogously to acceptance conditions for finite state machines over subsets of states $Q$. 
That is, we consider Büchi, Rabin and Streett conditions $\Fc$ as defined in \REFsec{sec:SCT:prelim} over subsets of $Q$
and say that a play $\rho$ is winning w.r.t.\ $\Fc$ if $\rho$ satisfies $\Fc$ on $\GG$. 
In addition, we also consider the \emph{parity} accepting condition \cite{EmersonJutla_1991}.
For the parity condition with $k$ parities, we assume there is a coloring function $\Omega: Q \rightarrow \set{0,\ldots, k-1}$.
A play $\rho$ is winning if the maximum color seen infinitely often is even.

We call a game graph equipped with a Büchi, Rabin, Streett, or parity winning condition $\Fc$ a Büchi, Rabin, Streett, or parity game, respectively, 
and denote it by the tuple $(\GG,\Fc)$. 
The set of all winning plays over $\GG$ w.r.t.\ $\Fc$ is denoted $\Pomega(\GG,\Fc)$.
A strategy $\fz{}$ is \emph{winning} in a game $(\GG,\Fc)$, if $\Pomega(\GG,\fz{})\subseteq\Pomega(\GG,\Fc)$. We remark that it is decidable if player~0 has a winning
strategy in a two-player game with a B\"uchi, Rabin, Streett, or parity winning
condition
\cite{PnueliRosner89,EmersonJutla88,EmersonJutla_1991,Thomas1995}.


\subsection{Supervisor Synthesis as a Two-Player Game}\label{sec:compare:plays}

Intuitively, one can interpret the interaction of a supervisor with the plant as a two-player game over $\Aut$. 
Player $0$ (the supervisor) picks a control pattern $\gamma\in\Gamma$ and player $1$ (the plant) resolves the remaining 
non-determinism by choosing a transition allowed by $\gamma$.
We formalize the construction below. 

\begin{definition}\label{def:MtoG}
 Let $\Aut=(\Q,\,\Sigma,\,\delta,\, q_0)$ be as in \REFprop{prop:StreetRabinAut} with $\Sigma_{uc}\subseteq\Sigma$ and 
$\Gamma:=\{\,\gamma\subseteq\ASigma\,|\,\ASigmauc\subseteq\gamma\,\}$. 
Then we define its associated game graph as $\GG(\Aut) = \Tuple{\Qz,\Qo, \Trz,\Tro,q_0}$ s.t.\
 \begin{compactitem}
  \item $\Qz=\Q$
  \item $\Qo=\Q\times\Gamma$
  \item $\Trz(\q)=\Set{\q}\times\Gamma$
  \item $\q'\in\Tro((\q,\gamma))$ iff $\sigma\in\gamma$ and $\q'=\delta(\q,\sigma)$. 
 \end{compactitem}
\end{definition}

Intuitively, the game graph $\GG$ makes the choice of the control pattern taken by the state-based supervisor over $\Aut$ explicit by 
inserting player $1$ states in between any two player $0$ states. 
I.e, the choice of control pattern $\gamma$ in state $\q\in\Q$ of $\Aut$ corresponds to the move of player $0$ from $q=\q$ to $q'=(\q,\gamma)$ in $\GG$. 
Further, as $\Aut$ is assumed to have unique transition labels, this expansion allows to remove all transition labels resulting in 
an unlabeled game graph $\GG$ as defined in \REFsec{sec:OG:prelim}.
\REFfig{fig:MtoG} shows the two-player game graph $\GG(\Aut)$ corresponding to $\Aut$ in \REFfig{fig:M}.

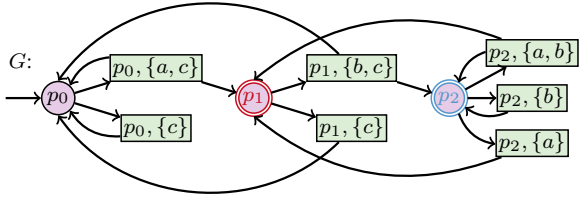
\begin{figure}[t]
\begin{center}
  \begin{tikzpicture}[auto,scale=1]
  \begin{footnotesize}
    \node (name) at (-0.5,0.5) {$G$:};
          \node (init) at (-0.8,0) {};
          \node[mystate,fill=violet!20] (p0) at (0,0) {$p_0$};
          \node[mysquare,draw,fill=\lgreen] (p0c) at (1.3,-0.4) {$p_0, \set{c}$};
          \node[mysquare,draw,fill=\lgreen] (p0ac) at (1.3,0.4) {$p_0, \set{a,c}$};
          \node[mystate,accepting,sws-red,fill=violet!20] (p1) at (2.6,0) {$p_1$};
          \node[mysquare,draw,fill=\lgreen] (p1bc) at (3.9,0.4) {$p_1, \set{b,c}$};
          \node[mysquare,draw,fill=\lgreen] (p1c) at (3.9,-0.4) {$p_1, \set{c}$};
          \node[mystate,accepting,sws-blue,fill=violet!20] (p2) at (5.2,0) {$p_2$};
          \node[mysquare,draw,fill=\lgreen] (p2ab) at (6.3,0.6) {$p_2, \set{a,b}$};
          \node[mysquare,draw,fill=\lgreen] (p2b) at (6.3,0) {$p_2, \set{b}$};
          \node[mysquare,draw,fill=\lgreen] (p2a) at (6.3,-0.6) {$p_2, \set{a}$};
          
          
          
\SFSAutomatEdge{init}{}{p0}{}{}  
\SFSAutomatEdge{p0}{}{p0c}{}{}
\SFSAutomatEdge{p0}{}{p0ac}{}{}
\SFSAutomatEdge{p0ac}{}{p0}{bend right}{}
\SFSAutomatEdge{p0ac}{}{p1}{}{}
\SFSAutomatEdge{p0c}{}{p0}{bend left}{}
\SFSAutomatEdge{p1}{}{p1bc}{}{}
\SFSAutomatEdge{p1}{}{p1c}{}{}
\SFSAutomatEdge{p1bc}{}{p2}{}{}
\SFSAutomatEdge{p1bc}{}{p0.north}{bend right=50}{}
\SFSAutomatEdge{p1c}{}{p0.south}{bend left=50}{}
\SFSAutomatEdge{p2}{}{p2ab}{}{}
\SFSAutomatEdge{p2}{}{p2a}{bend right}{}
\SFSAutomatEdge{p2}{}{p2b}{}{}
\SFSAutomatEdge{p2ab}{}{p2}{bend right}{}
\SFSAutomatEdge{p2b}{}{p2}{bend left}{}
\SFSAutomatEdge{p2ab}{}{p1.north}{bend right}{}
\SFSAutomatEdge{p2a}{}{p1.south}{bend left}{}

  \end{footnotesize}
\end{tikzpicture}
\end{center}
 \vspace{-0.8cm}
 \caption{Game graph $\GG(\Aut)$ associated with the synthesis automaton $\Aut$ in \REFfig{fig:M}. Supervisor and plant player states are indicated by 
a circular violet and a rectangular green shape, respectively. 
Rectangular states $(p,\gamma)$ indicate the control pattern\protect\footnotemark $\gamma$ chosen by the supervisor in state $p$ of $\Aut$.}\label{fig:MtoG}
 \vspace{-0.4cm}
\end{figure}
\footnotetext{We restrict depicted control patterns to events enabled at the source state.}

We now discuss an appropriate winning condition for the game.
Consider the state-based supervisor synthesis problem (\REFproblem{prob:SCT_P}) over the Streett/Rabin supervisor synthesis automaton $(\Aut,\FcS_P,\FcR_S)$. 
Here, \eqref{equ:Pf:a} requires that any infinite trace over $\Aut$ which is both compliant with $f$ and fulfills the plant assumption 
$\Lomega_P$ also fulfills the specification $\Lomega_S$. Hence, we can equivalently write \eqref{equ:Pf:a} as the implication
\begin{equation}\label{equ:Pf:beta}
 \AllQ{\pi\in\Pomega(\Aut,\check{f})}{\propImp*{\pi\in \Pomega(\Aut,\FcS_P)}{\pi\in\Pomega(\Aut,\FcR_S)}},
\end{equation}
which is in turn equivalent to 
\begin{equation}\label{equ:Pf:gamma}
 \AllQ{\pi\in\Pomega(\Aut,\check{f})}{\propDisj*{\pi\notin \Pomega(\Aut,\FcS_P)}{\pi\in\Pomega(\Aut,\FcR_S)}}.
\end{equation}
Consequently, \eqref{equ:Pf:a} is achieved by a supervisor $\check{f}$ which ensures plays over $\Aut$ 
either
\emph{do not} satisfy the Streett condition $\FcS_P$ \emph{or} 
fulfill the Rabin condition $\FcR_S$. 
However, as Rabin and Streett conditions are duals, \emph{not satisfying} the 
\emph{Streett} condition $\FcS_P$ is equivalent to 
\emph{satisfying} the \emph{Rabin} condition $\FcR_P:=\neg      \FcS_P$. 
Further, given the definition of Rabin winning conditions (see \REFsec{sec:SCT:prelim}), it is easy to see that a path 
over $\Aut$ satisfies either the Rabin condition $\FcR_P$ \emph{or} the Rabin condition $\FcR_S$ iff 
it satisfies the Rabin condition $\FcR_{P\rightarrow S}=\FcR_P\cup\FcR_S$. 
With this observation, we can further rewrite \eqref{equ:Pf:a} into the equivalent formula
\begin{equation}\label{equ:Pf:a:new}
 \Pomega(\Aut,\check{f})\subseteq \Pomega(\Aut,\FcR_{P\rightarrow S}).
\end{equation}
Thus, an obvious choice for the winning condition over the game graph $\GG(\Aut)$ is the Rabin condition $\FcR_{P\rightarrow S}$.

\begin{example}\label{expl:rabin}
 Consider the example from \REFsec{sec:example} and recall that $\FcS_P=\set{\tuple{\set{p_0,p_1,p_2},\set{p_2}}}$ and $\FcR_S=\set{\tuple{\set{p_1},\emptyset}}$. This gives the Rabin winning condition
\begin{equation}\label{ex:rabin}
\FcR_{P\rightarrow S}=\set{\tuple{\set{p_0,p_1,p_2},\set{p_2}}, \tuple{\set{p_1}, \emptyset}}
\end{equation}
for the induced game over $\GG(\Aut)$. Intuitively, the condition in \eqref{ex:rabin} states that either $p_2$  is only visited \emph{finitely often} (first Rabin pair) or $p_1$ is visited \emph{infinitely often} (second Rabin pair). These two possibilities admit winning strategies that either prevent the plant from fulfilling its liveness properties (e.g.\ by always disabling $a$ and $b$ in all states) or that ensure that the specification gets fulfilled (e.g.\ by choosing the strategy given in \REFsec{sec:example}).
\end{example}

As the above example demonstrates, a winning strategy for $\FcR_{P\rightarrow S}$ may not fulfill condition \eqref{equ:Pf:b}. A strategy can choose to satisfy  \eqref{ex:rabin} vacuously, by actively preventing the plant to fulfill its liveness properties.
Thus, we need to modify the winning condition to ensure the resulting strategy satisfies
both \eqref{equ:Pf:a} and \eqref{equ:Pf:b}. As the non-conflicting requirement of \eqref{equ:Pf:b} is not a linear property \cite{EhlersJdeds}, it cannot be easily \enquote{compiled away} in reactive synthesis.
Therefore, we consider a different type of game instead, called \emph{obliging game}.

\section{Supervisor Synthesis via Obliging Games}\label{sec:compare}

\subsection{Obliging Games}

An obliging game \cite{chatterjee2010obliging} 
is a triple $(\GG,\FcStrong,\FcWeak)$ where $\GG$ is a game graph and $\FcStrong$ and $\FcWeak$ are two
winning conditions, called strong and weak, respectively. 
To win an obliging game, player $0$ (the \enquote{controller}) needs to \emph{ensure} the strong winning condition $\FcStrong$ against any 
strategy of player $1$ (the \enquote{system}), while \emph{allowing} the system to cooperate with him 
to \emph{additionally} fulfill $\FcWeak$. 
Such winning strategies are therefore called \emph{gracious} and the synthesis problem for obliging games asks 
to synthesize such a gracious control strategy or determine that none exists, as formalized in the following problem statement.

\begin{problem}[Obliging Games]\label{prob:OG}
Given an obliging game $(\GG,\FcStrong,\FcWeak)$, synthesize a strategy $\fz{}$ for player $0$ s.t.\
\begin{subequations}\label{equ:fz}
 \begin{compactenum}[(i)]
 \item every play over $\GG$ compliant with $\fz{}$ is winning w.r.t.\ $\FcStrong$, 
 \begin{equation}\label{equ:fz:strong}
  \Pomega(\GG,\fz{})\subseteq \Pomega(\GG,\FcStrong)
 \end{equation}
 \item for every finite play $\nu$ over $\GG$ compliant with $\fz{}$, 
  there exists an infinite play $\rho$ over $\GG$ compliant with $\fz{}$ and winning 
  w.r.t.\ $\FcWeak$, s.t.\ $\nu\in\OpPre{\rho}$, i.e.,
 \begin{equation}\label{equ:fz:weak}
  \Pstar(\GG,\fz{}) \subseteq \LPrefix{\Pomega(\GG,\fz{})\cap\Pomega(\GG,\FcWeak)},
 \end{equation}
\end{compactenum}
\end{subequations}
or determine that no such strategy exists.
\end{problem}

The following theorem characterizes the solution of \REFproblem{prob:OG} by a reduction to a parity game. As parity games are decidable and one can effectivly construct winning strategies of player $0$ in such games, \REFthm{thm:obliging-games} establishes that the same is true for obliging games. 

\begin{theorem}
\label{thm:obliging-games}
Every obliging game $(G, \FcStrong, \FcWeak)$ is reducible
to a two-player game with an $\omega$-regular winning condition. 
In particular, an obliging game $(G, \FcR, \FcS)$ with $n$ states, a Rabin condition $\FcR$ with $k$
pairs, and a Streett condition $\FcS$ with $l$ pairs can be reduced to a two-player game with $nk^2k!2^{O(l)}$
states, a parity condition with $2k + 2$ colors, and $2k2^{O(l)}$ memory.
\end{theorem}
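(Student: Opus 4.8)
The plan is to recover this statement by the two-phase reduction of Chatterjee, Horn and L{\"o}ding \cite{chatterjee2010obliging}: first compile the weak Streett objective $\FcWeak=\FcS$ away into a B{\"u}chi-style liveness requirement on a graph enlarged by a memory of size $2^{O(l)}$, then compile the strong Rabin objective $\FcStrong=\FcR$ into a parity condition via an index appearance record, and finally merge the two into a single parity condition. The conjunction of the two resulting objectives will be realizable as one parity condition precisely because the witness/liveness part contributes only the two extra colors (the ``$+2$'').

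First I would treat the weak requirement \eqref{equ:fz:weak}, which is the real difficulty: it is not an $\omega$-regular property of a single play, since it quantifies ``for every finite prefix $\nu\in\Pstar(\GG,\fz{})$ there \emph{exists} a cooperative infinite extension satisfying $\FcS$.'' I would capture this with a \emph{challenge gadget}. At any point player~$1$ may open a challenge, after which player~$0$ is granted the moves that resolve the remaining nondeterminism so as to \emph{exhibit} a cooperative continuation; a challenge thus ranges over exactly the reachable prefixes demanded by \eqref{equ:fz:weak}. To score the exhibited continuation against $\FcS=\{\tuple{G_j,R_j}\}_{j\in[1;l]}$, I track in the state the \emph{set of pending requests} (those $j$ with $\OpInf{}$-relevant $R_j$ seen but $G_j$ not yet answered), which costs a factor $2^{O(l)}$, and I add a single B{\"u}chi mark firing whenever the pending set is cleared. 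The weak part then becomes the liveness condition ``the challenge is eventually answered (the pending set is emptied) infinitely often,'' a B{\"u}chi objective contributing the two fresh colors. The crucial lemma is soundness and completeness of this gadget: a gracious strategy for $(\GG,\FcStrong,\FcWeak)$ induces a strategy in the enlarged game whose every play satisfies $\FcStrong$ and meets the B{\"u}chi mark, and conversely any such winning strategy, read back onto $\GG$, is gracious. Completeness uses the fact (a determinacy/finite-memory argument, proved in \cite{chatterjee2010obliging}) that a gracious strategy may be taken finite-memory, so that the cooperative continuations can be produced uniformly rather than being chosen afresh per prefix.

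Second I would eliminate the strong Rabin condition $\FcR$. Applying the index appearance record to a Rabin condition with $k$ pairs yields an equivalent parity condition with $2k$ colors, at the cost of adjoining a memory consisting of a permutation of the $k$ pairs together with two pointers, i.e.\ a factor $k^2\,k!$. Overlaying this record on the enlarged graph of the first phase gives a product with $n\,k^2 k!\,2^{O(l)}$ states, matching the claimed bound. I would then combine the $2k$ colors produced by the record with the two colors of the challenge-liveness B{\"u}chi condition into a single parity condition with $2k+2$ colors, laying out the priorities so that the strong (even/odd) priorities dominate except where the B{\"u}chi color must force a fresh highest even visit; the winning strategies extracted from the resulting parity game then use $2k\,2^{O(l)}$ memory, and pulling them back through both phases yields a gracious strategy on $\GG$ with the same memory bound.

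The step I expect to be the main obstacle is the correctness of the challenge gadget, not the automata-theoretic bookkeeping. Concretely I must argue that (a) player~$1$'s challenges faithfully range over exactly the reachable prefixes required by \eqref{equ:fz:weak}; (b) player~$1$ cannot sabotage a legitimate demonstration, so that the B{\"u}chi mark is achievable precisely when a cooperative $\FcS$-continuation genuinely exists; and (c) the exhibited continuation is consistent with a single underlying player-$0$ strategy on $\GG$, so that graciousness transfers back and \eqref{equ:fz:strong} is never compromised by the demonstration moves. Once these three points are pinned down, verifying that the two memories (the record on the actual play and the pending-request set on the challenged play) do not interfere and jointly present a single parity condition is routine; the conceptual crux is the soundness/completeness of the gadget, after which the size, color, and memory estimates follow by inspection.
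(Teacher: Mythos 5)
Your overall architecture is the same as the paper's. The paper proves the first claim by citing Thm.~3 of \cite{chatterjee2010obliging}, and the quantitative claim by exactly your three ingredients: replace the weak Streett condition by a \emph{nondeterministic} B\"uchi automaton with $2^{O(l)}$ states, convert the strong Rabin condition to a parity condition with $2k$ colors via an index-appearance-record monitor of size $k^2k!$, and then invoke Lem.~2/Thm.~4 of \cite{chatterjee2010obliging}, which turns an obliging game with a strong parity condition ($2k$ colors) and a weak condition given by a B\"uchi automaton with $q$ states into an ordinary parity game with $O(\tilde{n}q)$ states, $2k+2$ colors, and $2qk$ memory. Your challenge gadget and your correctness obligations (a)--(c) re-derive precisely what that cited lemma establishes (it is the construction the paper itself sketches informally in its example section), so reproving rather than citing it is legitimate, and your state/color/memory accounting matches the paper's. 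One inessential remark: the finite-memory property of gracious strategies is not actually needed for the gracious-implies-winning direction --- the witness continuation can be taken as a branch of the (possibly infinite-memory) strategy tree, restarted at every deviation; finite memory falls out of the reduction as a corollary.

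There is, however, one step that fails as written: your weak-condition tracker. A \emph{deterministic} device that maintains the set of pending pairs and fires a B\"uchi mark whenever the set empties does not recognize the Streett language. First, a pair $\tuple{G_j,R_j}$ may be satisfied via the disjunct $\OpInf{\rho}\cap G_j=\emptyset$ with $R_j$ never visited: after the last $G_j$-visit, index $j$ stays pending forever and the mark never fires, although the continuation satisfies $\FcS$. This defect is unfixable deterministically --- already the one-pair Streett language ``finitely many $G$-visits or infinitely many $R$-visits'' is not recognizable by any deterministic B\"uchi automaton. Second, even restricted to the request-response reading, clearing the \emph{whole} set is too strong: two pairs whose request/response windows overlap can keep the pending set nonempty forever while every request is individually answered (a round-robin pointer repairs this second issue, but not the first). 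In your reduction this breaks \emph{completeness}: player~$0$ may be gracious, yet every available $\FcS$-continuation from some prefix leaves a pair perpetually pending, so the mark is unreachable and the reduced game is wrongly lost. The repair is the standard construction the paper's proof appeals to: nondeterministically guess the subset of pairs to be satisfied via $\OpInf{\rho}\cap G_j=\emptyset$ together with a stabilization point after which their $G_j$'s never recur, and verify the remaining pairs' $R_j$-visits in round-robin fashion; this automaton has $l\cdot 2^{l}=2^{O(l)}$ states, and its nondeterminism is harmless in your gadget because player~$0$ resolves it along the demonstrated continuation. With that substitution (and with the parity-plus-B\"uchi priority merge handled as in the cited Lem.~2, where the interaction of the $2k$ record colors with the two witness colors is worked out), your proof goes through with the stated bounds.
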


\begin{proof}
 The first claim follows from \cite[Thm.3]{chatterjee2010obliging}. 
For the second claim, recall that from a Streett condition with $l$ pairs, one can construct a (non-deterministic) B\"uchi automaton with $2^{O(l)}$ states that accepts the same language.
Moreover, by taking a product with a monitor with $k^2\cdot k!$ states, we can convert the Rabin condition to a parity condition \cite{EmersonJutla_1991} with $2k$ colors.
Now, the construction in \cite[Lem.2, Thm.4]{chatterjee2010obliging} reduces an obliging game with $\tilde{n}$ states, a strong parity winning condition with $2k$ colors and a weak winning condition accepted by a B\"uchi automaton with $q$ states into a
game with $O(\tilde{n} q)$ states, $2k+2$ colors, and memory $2qk$. 
Applying this reduction to our setting yields a parity game with $n\cdot k^2\cdot k!\cdot 2^{O(l)}$ states, 
$2k+2$ colors, and memory $2k\cdot 2^{O(l)}$.
\end{proof}
In order to reduce the supervisor synthesis problem to obliging games we need to define appropriate winning conditions.
We can see by inspection that after replacing \eqref{equ:Pf:a} by \eqref{equ:Pf:a:new} in \REFproblem{prob:SCT_P} and 
defining $\FcStrong:=\FcR_{P\rightarrow S}$ and $\FcWeak:=\FcS_P$ in \REFproblem{prob:OG}, 
the two problem descriptions match. 
However, the system models and the corresponding control mechanisms are different. 
We therefore need to match path-based supervisors for $\Aut$ with player~$0$ strategies over $\GG(\Aut)$. 

\subsection{Formal Reduction}

Given the reduction from $\Aut$ to a game graph $\GG(\Aut)$, and the strong and weak winning conditions,
it remains to show that the resulting obliging game is indeed equivalent to the path-based supervisor synthesis problem.
This is formalized in the following theorem.

\begin{theorem}\label{thm:SCTPvsOG}
Let $(\Aut,\FcS_P,\FcR_S)$ be a Streett/Rabin supervisor synthesis automaton 
and $\GG(\Aut)$ its associated game graph. 
Then there exists a path-based supervisor $\check{f}$ that is a solution to the supervisor synthesis problem over $(\Aut,\FcS_P,\FcR_S)$ iff 
there exists a player $0$ strategy $h$ winning the obliging game $(\GG(\Aut), \FcR_{P\rightarrow S}, \FcS_P)$.
\end{theorem}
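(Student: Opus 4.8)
The plan is to fix the projection $\Pi$ that sends a play $\rho = q_0(q_0,\gamma_0)q_1(q_1,\gamma_1)\cdots$ over $\GG(\Aut)$ to the sequence $\Pi(\rho) = q_0 q_1 \cdots$ obtained by deleting the interleaved player~$1$ states, and to show that $\Pi$ identifies the solutions of the two problems. By \REFdef{def:MtoG}, $q_{k+1}\in\Tro((q_k,\gamma_k))$ holds exactly when $q_{k+1}=\delta(q_k,\sigma)$ for some $\sigma\in\gamma_k$, so $\Pi(\rho)$ is always a path of $\Aut$, and conversely every path of $\Aut$ arises this way. Because the player~$1$ states $\Qo=\Q\times\Gamma$ are disjoint from $\Qz=\Q$ and both $\FcR_{P\rightarrow S}$ and $\FcS_P$ only mention subsets of $\Q$, we have $\OpInf{\rho}\cap\Q = \OpInf{\Pi(\rho)}$; hence $\rho$ satisfies either condition on $\GG(\Aut)$ iff $\Pi(\rho)$ satisfies it on $\Aut$. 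This condition-preservation, together with a compliance-preserving bijection between plays and paths, reduces the whole theorem to matching up \eqref{equ:Pf:a:new} with \eqref{equ:fz:strong} and \eqref{equ:Pf:b} with \eqref{equ:fz:weak}.

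For the direction from supervisors to strategies I would, given $\check{f}$, define $h$ on a finite play $\rho$ ending in a player~$0$ state by $h(\rho) := (\ON{Last}(\Pi(\rho)),\,\check{f}(\Pi(\rho)))$. Since consistency is prefix-closed, an easy induction on length shows $\Pi$ restricts to a bijection between the plays compliant with $h$ (finite ones ending in a player~$0$ state, and, by taking limits, infinite ones) and the paths compliant with $\check{f}$. Then \eqref{equ:fz:strong} follows from \eqref{equ:Pf:a:new} by pushing each $\rho\in\Pomega(\GG(\Aut),h)$ through $\Pi$ and applying condition-preservation, and \eqref{equ:fz:weak} follows from \eqref{equ:Pf:b}: given a finite compliant play $\nu$ ending in a player~$0$ state, project it, extend the projection to a path in $\Pomega(\Aut,\check{f})\cap\Pomega(\Aut,\FcS_P)$, and lift this extension back along $\Pi$ to an infinite compliant play extending $\nu$ and satisfying $\FcS_P$. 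Plays ending in a player~$1$ state are prefixes of such plays and carry the same player~$0$ move $h(\nu)$, so they are covered as well.

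The converse direction is the crux, because a path-based supervisor only reads the sequence of $\Aut$-states whereas a player~$0$ strategy could in principle branch on the control patterns chosen in the past, which $\Pi$ discards. The key fact to establish is a \emph{play-reconstruction} lemma: for every finite path $\pi$ of $\Aut$ there is at most one finite play $\rho$ that is compliant with $h$, ends in a player~$0$ state, and satisfies $\Pi(\rho)=\pi$. This follows by induction: out of each already-reconstructed history the player~$0$ move is forced to equal $h$ of that history, which fixes the interleaved player~$1$ state, and the following player~$0$ state is the one prescribed by $\pi$. Consequently the control pattern $h$ selects along any of its compliant plays is already a function of the projected state-sequence, so putting $\check{f}(\pi):=\gamma$ whenever $h$ outputs $(\ON{Last}(\pi),\gamma)$ on that unique $\rho$ (and an arbitrary pattern in $\Gamma$ otherwise) is well defined. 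With this $\check{f}$ the same bijection holds, and \eqref{equ:fz:strong}, \eqref{equ:fz:weak} translate into \eqref{equ:Pf:a:new}, \eqref{equ:Pf:b} exactly as above. I expect this reconstruction step—arguing that the extra history a game strategy may consult is redundant along its own compliant plays—to be the main obstacle; the remainder is bookkeeping on top of $\Pi$.

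It remains to recover the two parts of \eqref{equ:Pf:a} that are not literal matches. The containment $\Pomega(\Aut,\check{f})\cap\Pomega(\Aut,\FcS_P)\subseteq\Pomega(\Aut,\FcR_S)$ is precisely \eqref{equ:Pf:a:new}, via the Rabin/Streett duality $\FcR_P=\neg\FcS_P$ and $\FcR_{P\rightarrow S}=\FcR_P\cup\FcR_S$ recorded before \eqref{equ:Pf:a:new}. The non-emptiness $\emptyset\subsetneq\Pomega(\Aut,\check{f})\cap\Pomega(\Aut,\FcS_P)$ comes for free from \eqref{equ:fz:weak} applied to the trivial play $q_0$, which yields an infinite $h$-compliant play satisfying $\FcS_P$; its projection lies in $\Pomega(\Aut,\check{f})\cap\Pomega(\Aut,\FcS_P)$, and the symmetric argument supplies the corresponding non-emptiness on the game side in the forward direction.
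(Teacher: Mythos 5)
Your proof is correct, and while it shares the paper's overall skeleton---relate plays and paths by deleting/inserting the interleaved player~$1$ states (your $\Pi$ is the paper's $\ON{Plays}^{-1}$, your condition-preservation claim is \REFlem{lem:MvsG}, and your forward-direction definition of $h$ is literally the paper's \eqref{equ:ftoh})---you resolve the crux of the converse by a genuinely different decomposition. The paper introduces the auxiliary notion of \emph{non-ambiguous} strategies (\REFdef{def:nonambiguous}), proves in \REFprop{prop:nonambiguous} that any winning strategy $h$ can be normalized via the replay construction \eqref{equ:checkh} into a non-ambiguous $\check{h}$ with $\Pomega(\GG,h)=\Pomega(\GG,\check{h})$, and only then reads off the supervisor via \eqref{equ:htof}. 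You instead prove a play-reconstruction lemma---along $h$-compliant plays the player~$0$ move is forced by $h$ on the already-reconstructed history, so each path of $\Aut$ underlies at most one compliant play---and define $\check{f}$ directly from $h$ restricted to its own compliant plays, skipping the normalization entirely. The two arguments rest on the same observation (the pattern history a game strategy may consult is redundant along its compliant plays; the paper's $\check{h}$ in \eqref{equ:checkh} is exactly your forced reconstruction made into a strategy), but your route is leaner for proving this one theorem, avoiding the intermediate definition and the claim $\Pomega(\GG,h)=\Pomega(\GG,\check{h})$, whereas the paper's normalization isolates a reusable ``WLOG non-ambiguous'' statement and a clean one-to-one correspondence between supervisors and non-ambiguous strategies (\REFprop{prop:SCTPvsOG}). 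One further point in your favor: you explicitly recover the non-emptiness half of \eqref{equ:Pf:a} by applying the weak condition \eqref{equ:fz:weak} to the trivial play $q_0$, whereas the paper's proof of \REFprop{prop:SCTPvsOG} treats \eqref{equ:Pf:a} and \eqref{equ:Pf:a:new} as interchangeable, which strictly speaking covers only the containment; as you note, the missing non-emptiness is indeed supplied jointly by the weak/non-conflicting condition, so your bookkeeping closes a small gap the paper glosses over.
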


In order to prove \REFthm{thm:SCTPvsOG} we first formalize a mapping from paths over $\Aut$ to plays over $\GG$ and back. 
This will allow us to define corresponding path-based supervisors and gracious strategies and formalize their associated 
properties. 

\smallskip
\noindent\textit{Paths vs. Plays.}
To formally connect paths in $\Aut$ to plays over $\GG$, we define the set-valued map 
$\ON{Plays}: \q_0\Q^*\fun 2^{q_0(\Qo\Qz)^*}$ iteratively 
as follows:
 $\ON{Plays}(\q_0):=\set{\q_0}$ and $\ON{Plays}(\nu\q):=\SetComp{\mu \tilde{\q}\q}{\mu\in\ON{Plays}(\nu), \tilde{\q}\in\set{\ON{Last}(\nu)}\times\Gamma}$. 
By slightly abusing notation, we extend the map $\ON{Plays}$ to infinite paths 
$\pi\in\q_0\Q^\omega$ as the limit of all mappings $\ON{Plays}(p_n)$ where $(p_n)\in \q_0\Q^*$ is the unbounded monotone sequence of prefixes of $\pi$.
Similarly, we define the inverse map $\ON{Plays}^{-1}: q_0(\Qo\Qz)^*\fun \q_0\Q^*$ s.t.\ $\ON{Plays}^{-1}(\mu)=\nu$ 
where $\nu$ is the single element of the set $\SetComp{\nu\in \q_0\Q^*}{\mu\in\ON{Plays}(\nu)}$. Again we extend $\ON{Plays}^{-1}$ to infinite strings in the obvious way.

The construction of $\GG(\Aut)$ from $\Aut$ in \REFdef{def:MtoG} allows us to show that the map $\ON{Plays}$ indeed captures all 
the information required to map paths over $\Aut$ to the corresponding plays over $\GG(\Aut)$ and vice versa. 

\begin{lemma}\label{lem:MvsG}
Let $\Aut$ be a finite state machine as in \REFprop{prop:StreetRabinAut} and $\GG(\Aut)$ its associated game graph as in \REFdef{def:MtoG}. Then
\begin{subequations}\label{equ:plays}
\begin{align}
 \ON{Plays}(\Pstar(\Aut))&=\Pstar(\GG),\label{equ:plays:a}\\
 \ON{Plays}(\Pomega(\Aut))&=\Pomega(\GG),~\text{and}\label{equ:plays:b}\\
 \ON{Plays}(\Pomega(\Aut,\Fc))&=\Pomega(\GG,\Fc),\label{equ:plays:c}
\end{align}
where $\mathcal{F}$ is a winning condition over $\Aut$.
\end{subequations}
\end{lemma}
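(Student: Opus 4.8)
The plan is to prove all three identities from a single structural correspondence between paths in $\Aut$ and plays in $\GG(\Aut)$, established by induction on length for the finite statement \eqref{equ:plays:a} and then lifted to \eqref{equ:plays:b} and \eqref{equ:plays:c} by a limit argument. The guiding observation, read off directly from \REFdef{def:MtoG}, is that every play over $\GG$ alternates between player-$0$ states $\q\in\Qz=\Q$ and player-$1$ states $(\q,\gamma)\in\Qo$, so that deleting the player-$1$ states from a play $\rho$ recovers a sequence of $\Q$-states; this sequence is a path of $\Aut$ precisely because $\q'\in\Tro((\q,\gamma))$ forces $\q'=\delta(\q,\sigma)$ for some $\sigma\in\gamma$. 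Conversely, inserting between consecutive path-states $\q,\q'$ any player-$1$ state $(\q,\gamma)$ whose control pattern $\gamma$ enables the (by \REFprop{prop:StreetRabinAut}(e) unique) event $\sigma$ with $\q'=\delta(\q,\sigma)$ yields a legal play. This is exactly what the iterative definition of $\ON{Plays}$ records, its set-valuedness accounting for the different admissible control patterns at each step.

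For \eqref{equ:plays:a} I would prove both inclusions by induction on length. The base case $\ON{Plays}(\q_0)=\set{\q_0}$ is immediate. For the step, I use that $\ON{Plays}(\nu\q)$ extends each $\mu\in\ON{Plays}(\nu)$ by $\tilde\q\q$ with $\tilde\q\in\set{\ON{Last}(\nu)}\times\Gamma$, and check that $\mu\tilde\q\q$ is a legal play iff $\nu\q$ is a legal path, using the two transition clauses of \REFdef{def:MtoG}: the move $\ON{Last}(\nu)\to\tilde\q$ is always available since $\Trz(\ON{Last}(\nu))=\set{\ON{Last}(\nu)}\times\Gamma$, while the move $\tilde\q\to\q$ is available exactly when $\gamma$ contains the event realising the $\Aut$-transition $\ON{Last}(\nu)\to\q$ (I read the clause $\tilde\q\in\set{\ON{Last}(\nu)}\times\Gamma$ as implicitly restricted to those $\gamma$ enabling this event, as required for $\mu\tilde\q\q$ to be a play). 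Running the induction in one direction gives $\ON{Plays}(\Pstar(\Aut))\subseteq\Pstar(\GG)$; running it in the other, via the projection that deletes player-$1$ states (formalised by $\ON{Plays}^{-1}$), shows every finite play ending in a player-$0$ state lies in the image, giving $\supseteq$. Here I would record the bookkeeping convention that a finite play terminating in a player-$1$ state $(\q,\gamma)$ projects to the same $\Aut$-path as its player-$0$ prefix, so that $\ON{Plays}^{-1}$ is well defined and the correspondence is exact up to this trailing pending control-pattern move.

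Statement \eqref{equ:plays:b} then follows because an infinite play is the limit of its finite prefixes and $\ON{Plays}$ on infinite paths is defined as the corresponding limit, so \eqref{equ:plays:a} transfers directly. Finally, for \eqref{equ:plays:c} the crucial step is that, for any $\rho\in\ON{Plays}(\pi)$, the player-$0$ states of $\rho$ are exactly the states of $\pi$, whence $\OpInf{\rho}\cap\Qz=\OpInf{\pi}$; since every winning condition $\Fc$ considered here is specified over subsets of $\Q=\Qz$, each of the sets $G_i,R_i$ meets $\OpInf{\rho}$ iff it meets $\OpInf{\pi}$, so $\rho$ satisfies $\Fc$ on $\GG$ iff $\pi$ satisfies $\Fc$ on $\Aut$. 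Combining this equivalence with \eqref{equ:plays:b} yields \eqref{equ:plays:c}. I expect the main obstacle to be organisational rather than mathematical: keeping the set-valued map and its inverse coherent across the finite-to-infinite limit, and pinning down the alternation and last-state conventions so that the equalities hold literally rather than merely up to deletion of intermediate player-$1$ states.
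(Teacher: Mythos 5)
Your proposal is correct and follows essentially the same route as the paper's proof: the same alternating-sequence correspondence between paths of $\Aut$ and plays of $\GG(\Aut)$ for \eqref{equ:plays:a}, the limit-closure argument for \eqref{equ:plays:b}, and the observation $\OpInf{\rho}\cap\Qz=\OpInf{\pi}$ for \eqref{equ:plays:c}. If anything, you are more careful than the paper, which silently assumes the two conventions you make explicit (restricting $\gamma$ to patterns enabling the realised event, and handling finite plays ending in player-$1$ states).
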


\begin{proof}
\begin{inparaitem}[$\blacktriangleright$]
 \item \eqref{equ:plays:a}: Let $\nu=x_0x_1\hdots x_k\in\Pstar(\Aut)$. Then $\ON{Plays}(\nu)$ is the set containing all plays $\mu:=x_0(x_0,\gamma_0)x_1(x_1,\gamma_1)\hdots(x_k,\gamma_{k-1})x_k$ s.t.\ $\gamma_i\in\Gamma$ for all $i\in[0;k]$. It follows from \REFdef{def:MtoG} that all $\mu\in\ON{Plays}(\nu)$ are a play over $\GG$ starting in $q_0$, and, hence $\ON{Plays}(\Pstar(\Aut))\subseteq\Pstar(\GG)$. The inverse direction follows similarly from  the last condition in \REFdef{def:MtoG}.
  \item \eqref{equ:plays:b} follows directly from \eqref{equ:plays:a} by taking the limit closure on both sides.
 \item \eqref{equ:plays:b} First, any winning condition over $\Aut$ is also a winning condtion over $\GG$ as $Q=\Qz\cup\Qo$ with $\Qz=\Q$.Now pick any path $\pi\in\Pomega(\Aut,\Fc)$. Then we know that the set $\ON{Inf}(\pi)\subseteq\Q$ fulfills the conditions for acceptance w.r.t.\ the acceptance condition $\Fc$ over $\Aut$. Now take any $\rho\in \ON{Plays}(\pi)\subseteq\Pomega(\GG)$ and observe that deciding winning of $\rho$ w.r.t.\ $\Fc\subseteq\Qz$ only depends on the set $\ON{Inf}(\rho)|_{\Qz}\subseteq\Qz$. Then the claim follows from the observation that the definition of $\ON{Plays}$ implies $\ON{Inf}(\rho)|_{\Qz}=\ON{Inf}(\pi)$. 
\end{inparaitem}
\end{proof}

\smallskip
\noindent\textit{Supervisors vs. Strategies.}
Unfortunately, we cannot directly utilize the properties in \eqref{equ:plays} to relate 
path-based supervisors and gracious strategies. 
By definition, control strategies can base their decision on all information from the past observed state sequence. 
As one path over $\Aut$ corresponds to multiple plays over $\GG(\Aut)$, every 
such play could in principle induce a different control decision. 
We call strategies that do not utilize this additional flexibility  \emph{non-ambiguous}.

\begin{definition}\label{def:nonambiguous}
Let $\GG$ be as in \REFdef{def:MtoG}. 
We call a player $0$ strategy over $\GG$ \emph{non-ambiguous} if for any $\nu\in\q_0\Q^*$ and any $\mu,\mu'\in \ON{Plays}(\pi)$,
we have $\check{h}(\mu)=\check{h}(\mu')$. 
\end{definition}

A strategy over $\GG$ can only choose one particular next state in a current one. 
As the initial state is unique, there must be a unique control pattern chosen in this state leading to a unique next state in $\GG$. 
Iteratively applying this argument shows that there is a unique play over $\GG$ generated under any control strategy $h$. 
Therefore, we can always construct a non-ambiguous strategy $\check{h}$ over $\GG$ from a given control strategy $h$ with the same set of generated plays.

\begin{proposition}\label{prop:nonambiguous}
 Given the premises of \REFlem{def:nonambiguous}, let $\fz{}$ be a strategy over $\GG$, then $\check{h}$ s.t.\ 
 \begin{equation}\label{equ:checkh}
  \check{h}(\q_0):=\fz{\q_0},\quad\check{h}(\mu \tilde{\q}_k \q_{k+1}):=\fz{\mu \check{h}(\mu) \q_{k+1}}
 \end{equation}
  is a non-ambiguous  player $0$ strategy over $\GG$ and it holds that $\Pomega(\GG,h)=\Pomega(\GG,\check{h})$.
\end{proposition}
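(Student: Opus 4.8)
The plan is to verify, in order, that $\check{h}$ is a well-defined player~$0$ strategy, that it is non-ambiguous in the sense of \REFdef{def:nonambiguous}, and that $\Pomega(\GG,h)=\Pomega(\GG,\check{h})$. Well-definedness is immediate from \eqref{equ:checkh}: its right-hand side refers to $\check{h}$ only on the strictly shorter prefix $\mu$ and to $h$ on a history of the correct shape, so the recursion assigns exactly one $\Qo$-state to every $\mu\in\q_0(\Qo\Qz)^*$ ending in a $\Qz$-state. The remaining two claims are where the work lies, and I expect non-ambiguity to be the main obstacle.

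For non-ambiguity the guiding observation is that \eqref{equ:checkh} evaluates $h$ on a history whose player~$1$ states are overwritten by the choices $\check{h}$ itself dictates, so that $\check{h}(\mu)$ ought to depend on $\mu$ only through its $\Qz$-projection $\ON{Plays}^{-1}(\mu)\in\Pstar(\Aut)$. To turn this into a proof I would introduce, for each finite path $\nu=\q_0\cdots\q_k\in\Pstar(\Aut)$, the \emph{canonical} play $\hat\mu(\nu)\in\ON{Plays}(\nu)$ defined by $\hat\mu(\q_0):=\q_0$ and $\hat\mu(\nu\q_{k+1}):=\hat\mu(\nu)\,\check{h}(\hat\mu(\nu))\,\q_{k+1}$, and then prove by induction on $|\nu|$ the invariant
\begin{equation*}
 \check{h}(\mu)=h(\hat\mu(\nu))\quad\text{for every }\nu\in\Pstar(\Aut)\text{ and every }\mu\in\ON{Plays}(\nu).
\end{equation*}
Since the right-hand side no longer mentions $\mu$, this invariant yields $\check{h}(\mu)=\check{h}(\mu')$ for all $\mu,\mu'\in\ON{Plays}(\nu)$, which is exactly \REFdef{def:nonambiguous}. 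The inductive step writes $\mu=\mu_0\,\tilde\q\,\q_k$ with $\mu_0\in\ON{Plays}(\nu_0)$, $\nu_0=\q_0\cdots\q_{k-1}$, expands $\check{h}(\mu)=h(\mu_0\,\check{h}(\mu_0)\,\q_k)$ via \eqref{equ:checkh}, and must identify this argument with $\hat\mu(\nu)=\hat\mu(\nu_0)\,\check{h}(\hat\mu(\nu_0))\,\q_k$. This is precisely the crux: \eqref{equ:checkh} rewrites only the \emph{most recent} player~$1$ state, so reducing $h(\mu_0\,\check{h}(\mu_0)\,\q_k)$ to $h(\hat\mu(\nu))$ — equivalently, showing that the player~$1$ states buried inside $\mu_0$ do not affect the outcome — requires propagating the canonicalization through the entire prefix, and this is the part of the induction that must be set up with the strengthened hypothesis relating $\mu_0$ to $\hat\mu(\nu_0)$.

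For the equality of generated plays I would argue by two symmetric inductions that $h$ and $\check{h}$ agree on every prefix of a compliant play that ends in a $\Qz$-state. Fix $\rho\in\Pomega(\GG,h)$ and show by induction on $k$ that $\check{h}(\rho|_{[0,2k]})=h(\rho|_{[0,2k]})$. The base case is $\check{h}(\q_0)=h(\q_0)$ by \eqref{equ:checkh}. For the step, writing $\rho|_{[0,2k]}=\rho|_{[0,2k-2]}\,\rho(2k-1)\,\rho(2k)$ and applying \eqref{equ:checkh} gives $\check{h}(\rho|_{[0,2k]})=h(\rho|_{[0,2k-2]}\,\check{h}(\rho|_{[0,2k-2]})\,\rho(2k))$; the induction hypothesis turns $\check{h}(\rho|_{[0,2k-2]})$ into $h(\rho|_{[0,2k-2]})$, and compliance of $\rho$ with $h$ equates this with $\rho(2k-1)$, so the argument of $h$ collapses to $\rho|_{[0,2k]}$. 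Hence $\check{h}(\rho|_{[0,2k]})=h(\rho|_{[0,2k]})=\rho(2k+1)$, which shows $\rho$ is also compliant with $\check{h}$ and gives $\Pomega(\GG,h)\subseteq\Pomega(\GG,\check{h})$. The reverse inclusion follows from the identical computation with the roles of $h$ and $\check{h}$ exchanged, now using compliance with $\check{h}$ to replace $\check{h}(\rho|_{[0,2k-2]})$ by $\rho(2k-1)$. Together these give $\Pomega(\GG,h)=\Pomega(\GG,\check{h})$, completing the argument.
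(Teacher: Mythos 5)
Your second half---the equality $\Pomega(\GG,h)=\Pomega(\GG,\check{h})$---is correct and is essentially the paper's own argument: along a compliant play, induction plus compliance collapses the rewritten history fed to $h$ back to the play itself, in both directions. The genuine gap is in the non-ambiguity half, and you have located it precisely but not closed it. You introduce the canonical play $\hat\mu(\nu)$ and the invariant $\check{h}(\mu)=h(\hat\mu(\nu))$ for all $\mu\in\ON{Plays}(\nu)$, correctly observe that the inductive step needs the player-$1$ states buried inside the prefix $\mu_0$ to be invisible to $h$, and then defer this to ``a strengthened hypothesis'' that is never supplied. No strengthening can supply it: under the literal reading of \eqref{equ:checkh}, which rewrites only the \emph{most recent} player-$1$ state, the invariant---and the non-ambiguity claim itself---is false. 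Unfold the recursion twice on $\mu=\q_0\tilde{\q}\q_1\tilde{r}\q_2$ and $\mu'=\q_0\tilde{\q}'\q_1\tilde{r}'\q_2$ with $\tilde{\q}=(\q_0,\gamma)\neq(\q_0,\gamma')=\tilde{\q}'$: both inner calls return the same value $c:=h(\q_0\,h(\q_0)\,\q_1)$, so $\check{h}(\mu)=h(\q_0\,\tilde{\q}\,\q_1\,c\,\q_2)$ while $\check{h}(\mu')=h(\q_0\,\tilde{\q}'\,\q_1\,c\,\q_2)$. Since a strategy is an arbitrary total function on $q_{init}(\Qo\Qz)^*$, $h$ may distinguish these two histories, and then $\check{h}$ is ambiguous on $\ON{Plays}(\q_0\q_1\q_2)$.

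The repair is not a cleverer induction but a repaired definition: the \emph{entire} history handed to $h$ must be canonicalized, i.e., $\check{h}(\mu):=h\bigl(\hat\mu(\ON{Plays}^{-1}(\mu))\bigr)$, equivalently $\check{h}(\mu\,\tilde{\q}\,\q):=h\bigl(\hat\mu(\nu)\,\check{h}(\hat\mu(\nu))\,\q\bigr)$ with $\nu=\ON{Plays}^{-1}(\mu)$. Under that reading your invariant holds by construction, non-ambiguity is immediate, and your compliance argument for $\Pomega(\GG,h)=\Pomega(\GG,\check{h})$ goes through unchanged, because along an $h$-compliant (resp.\ $\check{h}$-compliant) play the canonical play coincides with the play itself by the very induction you wrote. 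For what it is worth, the paper's proof makes the same silent leap at the identical spot: it concludes $\check{h}(\mu\tilde{\q}\q)=\check{h}(\mu'\tilde{\q}'\q)$ from $\check{h}(\mu)=\check{h}(\mu')$ even though $h$ is applied to the distinct prefixes $\mu\neq\mu'$. So your diagnosis of the crux is sharper than the paper's treatment---but flagging the obstacle is not discharging it, and as submitted your proof of non-ambiguity is incomplete, indeed uncompletable without amending \eqref{equ:checkh} as above.
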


\begin{proof}
 For the base-case, $\ON{Plays}(\q_0)=\set{\q_0}$ and therefore for all $\mu,\mu'\in \ON{Plays}(\q_0)$ we have $\mu=\mu'=\q_0$.  Hence $\check{h}(\mu)=\check{h}(\mu')=\fz{\q_0}$, i.e., $\Pomega(\GG,h)|_{[0,0]}=\Pomega(\GG,\check{h})|_{[0,0]}$.
 For the induction step, fix $\nu\in\q_0\Q^*$ with $|\nu|=k>1$ and assume that for all $\mu,\mu'\in \ON{Plays}(\nu)$ we have $\check{h}(\mu)=\check{h}(\mu')$. Now choose any $\q\in\Q$ and observe that $\ON{Plays}(\nu \q)=\SetComp{\mu\tilde{\q}\q}{\mu\in\ON{Plays}(\nu), \tilde{\q}\in\set{\ON{Last}(\nu)}\times\Gamma}$. Now pick any two $\mu\tilde{\q}\q, \mu'\tilde{\q}'\q\in\ON{Plays}(\nu \q)$ and observe that from the definition of $\check{h}$ follows that $\check{h}(\mu\tilde{\q}\q)=\fz{\mu \check{h}(\mu) \q}$ and $\check{h}(\mu'\tilde{\q}'\q)=\fz{\mu' \check{h}(\mu') \q}$. As $\mu,\mu'\in\ON{Plays}(\nu)$ it follows from the induction hypothesis that $\check{h}(\mu)=\check{h}(\mu')$ and therefore $\check{h}(\mu\tilde{\q}\q)=\check{h}(\mu'\tilde{\q}'\q)$. This proves that $\check{h}$ is non-ambiguous.
  Now assume $\Lambda:=\Pomega(\GG,h)|_{[0,k]}=\Pomega(\GG,\check{h})|_{[0,k]}$ and $\check{h}(\mu)=\fz{\mu}$ for all $\mu\in\Lambda$. Then it follows from \REFdef{def:MtoG} that $\Pomega(\GG,h')|_{[0,k+2]}$ contains all strings $\mu(\ON{Last}(\mu),\gamma)\q'$ s.t.\ $\mu\in\Lambda$, $(\ON{Last}(\mu),\gamma)=h'(\mu)$ and $\q'=\delta(\q_0,\sigma)$ for some $\sigma\in\gamma$. With this it immediately follows from the induction hypothesis that $\Pomega(\GG,h)|_{[0,k+2]}=\Pomega(\GG,\check{h})|_{[0,k+2]}$. As both $\Pomega(\GG,h)$ and $\Pomega(\GG,\check{h})$ are closed languages, this proves the claim.
\end{proof}

\REFprop{prop:nonambiguous} shows that restricting attention to non-ambiguous player $0$ strategies over $\GG$ is without loss of generality. Now it is easy to see that non-ambiguous strategies over $\GG(\Aut)$ allow for a one-to-one correspondence with path-based supervisors over $\Aut$, which finally leads to the desired correspondence between \REFproblem{prob:SCT_P} and \REFproblem{prob:OG}.

\begin{proposition}\label{prop:SCTPvsOG}
 Given the premises of \REFthm{thm:SCTPvsOG} the following holds.
 \begin{inparaenum}[(i)]
  \item Let $\check{f}$ be a supervisor solving $(\Aut,\FcS_P,\FcR_S)$ and $\check{h}$ a player $0$ winning strategy over $\GG(\Aut)$ s.t.\
   \begin{equation}\label{equ:ftoh}
 \AllQ{\mu\in q_0(Q^1 Q^0)^*}{\check{h}(\mu)=(\ON{Last}(\mu),\check{f}(\ON{Plays}^{-1}(\mu)))}.
\end{equation}
Then $\check{h}$ is a non-ambiguous winning strategy for $(\GG(\Aut), \FcR_{P\rightarrow S}, \FcS_P)$.
\item Let $\check{h}$ be a non-ambiguous winning strategy for $(\GG(\Aut), \FcR_{P\rightarrow S}, \FcS_P)$ and $\check{f}$ s.t.\ 
 \begin{equation}\label{equ:htof}
  \check{f}(\nu)=\gamma~\text{with}~\gamma\in\set{\ExQ{\mu\in \ON{Plays}(\nu)}{\check{h}(\mu)=(\cdot,\gamma)}}.
 \end{equation}
 Then $\check{f}$ is a path-based supervisor solving $(\Aut,\FcS_P,\FcR_S)$.
 \end{inparaenum}
\end{proposition}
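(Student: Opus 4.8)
The plan is to prove both directions (i) and (ii) through a single \emph{consistency correspondence}: for $\check{f}$ and $\check{h}$ coupled by \eqref{equ:ftoh} (equivalently, by \eqref{equ:htof} once $\check{h}$ is non-ambiguous), the projection $\ON{Plays}^{-1}$ restricts to a bijection between $\check{h}$-compliant plays over $\GG(\Aut)$ and $\check{f}$-consistent paths over $\Aut$, at both the finite and the infinite level. Concretely, I would show $\ON{Plays}^{-1}(\Pstar(\GG,\check{h}))=\Pstar(\Aut,\check{f})$ and $\ON{Plays}^{-1}(\Pomega(\GG,\check{h}))=\Pomega(\Aut,\check{f})$, together with the fact that every $\check{f}$-consistent path $\pi$ has a \emph{unique} $\check{h}$-compliant play $\rho\in\ON{Plays}(\pi)$. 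Once this correspondence is in place, each supervisor condition of \REFproblem{prob:SCT_P} and each strategy condition of \REFproblem{prob:OG} can be transported across using \REFlem{lem:MvsG}~\eqref{equ:plays:c}, which identifies acceptance of a path with acceptance of its plays; the two statements then follow by reading this transport in the two directions.

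To establish the consistency correspondence I would argue by induction on play length, using \REFdef{def:MtoG} and the coupling \eqref{equ:ftoh}. In a $\check{h}$-compliant play, at every player-$0$ state $\q$ the strategy forces the successor $(\q,\check{h}(\mu))=(\q,\check{f}(\ON{Plays}^{-1}(\mu)))$, fixing the control pattern $\gamma=\check{f}(\nu\q)$, while the subsequent player-$1$ move is an arbitrary $\q'=\delta(\q,\sigma)$ with $\sigma\in\gamma$. Projecting such a play to $\Aut$ yields exactly a transition permitted by $\check{f}$, i.e.\ the $\check{f}$-consistency condition; conversely, lifting a $\check{f}$-consistent path by inserting the forced control patterns produces a $\check{h}$-compliant play, and the inserted patterns are uniquely determined, giving the claimed uniqueness. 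Passing to limits extends the correspondence to infinite plays and paths.

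For part (i), non-ambiguity of $\check{h}$ is immediate from \eqref{equ:ftoh}: the value $\check{h}(\mu)=(\ON{Last}(\mu),\check{f}(\ON{Plays}^{-1}(\mu)))$ depends on $\mu$ only through $\ON{Last}(\mu)$ and $\ON{Plays}^{-1}(\mu)$, both constant over $\ON{Plays}(\nu)$. For the strong condition \eqref{equ:fz:strong}, I would take $\rho\in\Pomega(\GG,\check{h})$, project to $\pi:=\ON{Plays}^{-1}(\rho)\in\Pomega(\Aut,\check{f})$ by the correspondence, invoke \eqref{equ:Pf:a:new} to get $\pi\in\Pomega(\Aut,\FcR_{P\rightarrow S})$, and then use $\rho\in\ON{Plays}(\pi)$ together with \REFlem{lem:MvsG}~\eqref{equ:plays:c} to conclude $\rho\in\Pomega(\GG,\FcR_{P\rightarrow S})$. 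For the weak condition \eqref{equ:fz:weak}, I would take a finite $\rho\in\Pstar(\GG,\check{h})$, set $\nu:=\ON{Plays}^{-1}(\rho)\in\Pstar(\Aut,\check{f})$, and apply the non-conflicting property \eqref{equ:Pf:b} to obtain an infinite $\pi\in\Pomega(\Aut,\check{f})\cap\Pomega(\Aut,\FcS_P)$ with $\nu\in\LPrefix{\pi}$; lifting $\pi$ to its unique $\check{h}$-compliant play $\rho'$ yields a play extending $\rho$ (by uniqueness of the lift and prefix-preservation) that satisfies $\FcS_P$ by \REFlem{lem:MvsG}~\eqref{equ:plays:c}, so $\rho\in\LPrefix{\Pomega(\GG,\check{h})\cap\Pomega(\GG,\FcS_P)}$.

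Part (ii) is the mirror image. First, non-ambiguity of $\check{h}$ makes the control pattern $\gamma$ in \eqref{equ:htof} unique, so $\check{f}$ is well-defined, and under this definition $\check{f}$ and $\check{h}$ again satisfy the coupling \eqref{equ:ftoh}, whence the same consistency correspondence applies. I would then run the two transport arguments in reverse: the strong condition \eqref{equ:fz:strong} for $\check{h}$ yields \eqref{equ:Pf:a:new} (the containment part of \eqref{equ:Pf:a}) for $\check{f}$, and the weak condition \eqref{equ:fz:weak} yields both the non-conflicting condition \eqref{equ:Pf:b} and the non-emptiness $\Pomega(\Aut,\check{f})\cap\Pomega(\Aut,\FcS_P)\neq\emptyset$ (apply \eqref{equ:fz:weak} to the finite play $\q_0$). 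Hence $\check{f}$ solves \REFproblem{prob:SCT_P}. I expect the main obstacle to be the bookkeeping in the weak-condition direction: one must ensure that the witnessing infinite object genuinely \emph{extends} the given finite one, which is precisely where non-ambiguity is essential, since it makes the $\check{h}$-compliant lift of a $\check{f}$-consistent path unique so that the lift of an extension extends the lift. The remaining inclusions are then routine consequences of the correspondence and \REFlem{lem:MvsG}.
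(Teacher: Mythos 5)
Your proposal is correct and takes essentially the paper's route: the paper likewise establishes, via the induction of \REFprop{prop:nonambiguous} and the correspondence from the proof of \eqref{equ:plays:a}, the identity $\Pomega(\GG,\check{h})=\ON{Plays}(\Pomega(\Aut,\check{f}))$, and then transports \eqref{equ:Pf:a:new} and \eqref{equ:Pf:b} using \REFlem{lem:MvsG}~\eqref{equ:plays:c} together with monotonicity of $\ON{Plays}$ and its commutation with $\LPrefix{\cdot}$ and intersection. Your preimage formulation ($\ON{Plays}^{-1}(\Pomega(\GG,\check{h}))=\Pomega(\Aut,\check{f})$ with unique $\check{h}$-compliant lifts) is in fact a slightly sharper rendering of the paper's \eqref{equ:helpa} --- whose right-hand side, read literally, also contains non-compliant lifts --- and your explicit extraction of the non-emptiness clause of \eqref{equ:Pf:a} from \eqref{equ:fz:weak} applied at the initial state makes precise a step the paper leaves implicit in \enquote{the reverse direction follows from the same reasoning}.
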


\begin{proof}
First, observe that given $\check{f}$, every $\check{h}$ fulfilling \eqref{equ:ftoh} is non-ambiguous by construction. Conversely, given a non-ambiguous strategy $\check{h}$, \eqref{equ:htof} implies that $\gamma$ is uniquely defined for any $\nu\in x_0X^*$, i.e., $\check{f}$ is a path-based strategy over $\Aut$. Further, given non-ambiguity of $\check{h}$ we can combine the induction from the proof of \REFprop{prop:nonambiguous} and the correspondence used in the proof of \eqref{equ:plays:a} to conclude that 
\begin{equation}\label{equ:helpa}
 \Pomega(\GG,\check{h})=\ON{Plays}(\Pomega(\Aut,\check{f})).
\end{equation}

Now assume \eqref{equ:Pf:a} (equivalently \eqref{equ:Pf:a:new}) holds for $\check{f}$. As the map $\ON{Plays}$ is monotone, this gives $\ON{Plays}(\Pomega(\Aut,\check{f}))\subseteq\ON{Plays}(\Pomega(\Aut,\FcStrong))$. Then it follows from  \eqref{equ:helpa} and \eqref{equ:plays:c} that \eqref{equ:Pf:a} implies that \eqref{equ:fz:strong} holds for $\check{h}$.
Now assume \eqref{equ:Pf:b} holds for $\check{f}$ and observe that the map $\ON{Plays}$ fulfills the following properties: (a) $\LPrefix{\ON{Plays}(A)}=\ON{Plays}(\LPrefix{A})$, and (b) $\ON{Plays}(A\cap B)=\ON{Plays}(A)\cap\ON{Plays}(B)$. With this, it follows from \eqref{equ:helpa} and \eqref{equ:plays:c} that \eqref{equ:Pf:b} implies that \eqref{equ:fz:weak} holds for $\check{h}$.

The reverse direction follows from the same reasoning and is therefore omitted.
\end{proof}

With this, we see that \REFthm{thm:SCTPvsOG} is an immediate corollary of \REFprop{prop:nonambiguous} and \REFprop{prop:SCTPvsOG}.

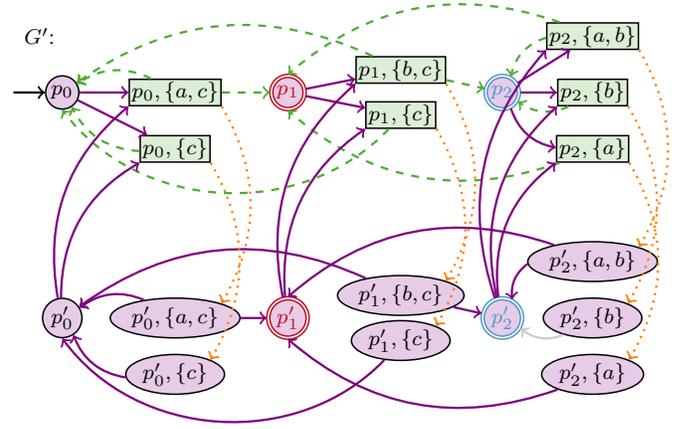
\begin{figure}
\begin{center}
  \begin{tikzpicture}[auto,scale=1.5]
  \begin{footnotesize}
          \node (name) at (-0.2,0.5) {$\GG'$:};
          \node (init) at (-0.5,0) {};
          \node[mystate,fill=violet!20] (p0) at (0,0) {$p_0$};
          \node[mysquare,draw,fill=\lgreen] (p0c) at (1,-0.5) {$p_0, \set{c}$};
          \node[mysquare,draw,fill=\lgreen] (p0ac) at (1,0) {$p_0, \set{a,c}$};
          \node[mystate,accepting,sws-red,fill=violet!20] (p1) at (2,0) {$p_1$};
          \node[mysquare,draw,fill=\lgreen] (p1bc) at (3,0.2) {$p_1, \set{b,c}$};
          \node[mysquare,draw,fill=\lgreen] (p1c) at (3,-0.2) {$p_1, \set{c}$};
          \node[mystate,accepting,sws-blue,fill=violet!20] (p2) at (3.9,0) {$p_2$};
          \node[mysquare,draw,fill=\lgreen] (p2ab) at (4.7,0.5) {$p_2, \set{a,b}$};
          \node[mysquare,draw,fill=\lgreen] (p2b) at (4.7,0) {$p_2, \set{b}$};
          \node[mysquare,draw,fill=\lgreen] (p2a) at (4.7,-0.5) {$p_2, \set{a}$};

          \node[mystate,fill=violet!20] (p0p) at (0,-2) {$p'_0$};
          \node[myestate,draw,fill=violet!20] (p0cp) at (1,-2.5) {$p'_0, \set{c}$};
          \node[myestate,draw,fill=violet!20] (p0acp) at (1,-2) {$p'_0, \set{a,c}$};
          \node[mystate,accepting,sws-red,fill=violet!20] (p1p) at (2,-2) {$p'_1$};
          \node[myestate,draw,fill=violet!20] (p1bcp) at (3,-1.8) {$p'_1, \set{b,c}$};
          \node[myestate,draw,fill=violet!20] (p1cp) at (3,-2.2) {$p'_1, \set{c}$};
          \node[mystate,accepting,sws-blue,fill=violet!20] (p2p) at (3.9,-2) {$p'_2$};
          \node[myestate,draw,fill=violet!20] (p2abp) at (4.7,-1.5) {$p'_2, \set{a,b}$};
          \node[myestate,draw,fill=violet!20] (p2bp) at (4.7,-2) {$p'_2, \set{b}$};
          \node[myestate,draw,fill=violet!20] (p2ap) at (4.7,-2.5) {$p'_2, \set{a}$};

\SFSAutomatEdge{init}{}{p0}{}{}  
\SFSAutomatEdge{p0}{}{p0c}{violet}{}
\SFSAutomatEdge{p0}{}{p0ac}{violet}{}
\SFSAutomatEdge{p0ac}{}{p0}{bend right,sws-green, dashed}{}
\SFSAutomatEdge{p0ac}{}{p1}{sws-green, dashed}{}
\SFSAutomatEdge{p0c}{}{p0}{bend left,sws-green, dashed}{}
\SFSAutomatEdge{p1}{}{p1bc}{violet}{}
\SFSAutomatEdge{p1}{}{p1c}{violet}{}
\SFSAutomatEdge{p1bc}{}{p2}{sws-green, dashed}{}
\SFSAutomatEdge{p1bc}{}{p0}{bend right,sws-green, dashed}{}
\SFSAutomatEdge{p1c}{}{p0.south}{bend left=50,sws-green, dashed}{}
\SFSAutomatEdge{p2}{}{p2ab}{violet}{}
\SFSAutomatEdge{p2}{}{p2a}{bend right,violet}{}
\SFSAutomatEdge{p2}{}{p2b}{violet}{}
\SFSAutomatEdge{p2ab}{}{p2}{bend right,sws-green, dashed}{}
\SFSAutomatEdge{p2b}{}{p2}{bend left,sws-green, dashed}{}
\SFSAutomatEdge{p2ab}{}{p1.north}{bend right,sws-green, dashed}{}
\SFSAutomatEdge{p2a}{}{p1.south}{bend left,sws-green, dashed}{}

\SFSAutomatEdge{p0acp}{}{p0p}{bend right,violet}{}
\SFSAutomatEdge{p0acp}{}{p1p}{violet}{}
\SFSAutomatEdge{p0cp}{}{p0p}{bend left,violet}{}
\SFSAutomatEdge{p1bcp}{}{p2p}{violet}{}
\SFSAutomatEdge{p1bcp}{}{p0p}{bend right,violet}{}
\SFSAutomatEdge{p1cp}{}{p0p.south}{bend left=50,violet}{}
\SFSAutomatEdge{p2abp}{}{p2p}{bend right,violet}{}
\SFSAutomatEdge{p2bp}{}{p2p}{bend left,\gr}{}
\SFSAutomatEdge{p2abp}{}{p1p.north}{bend right,violet}{}
\SFSAutomatEdge{p2ap}{}{p1p.south}{bend left,violet}{}

%

\SFSAutomatEdge{p0ac.south east}{}{p0acp.north east}{bend left, orange, dotted}{}
\SFSAutomatEdge{p0c.south east}{}{p0cp.north east}{bend left, orange, dotted}{}
\SFSAutomatEdge{p1bc.south east}{}{p1bcp.north east}{bend left, orange, dotted}{}
\SFSAutomatEdge{p1c.south east}{}{p1cp.north east}{bend left, orange, dotted}{}
\SFSAutomatEdge{p2ab.south east}{}{p2abp.north east}{bend left, orange, dotted}{}
\SFSAutomatEdge{p2b.south east}{}{p2bp.north east}{bend left, orange, dotted}{}
\SFSAutomatEdge{p2a.south east}{}{p2ap.north east}{bend left, orange, dotted}{}

%

\SFSAutomatEdge{p0p}{}{p0ac.south west}{violet, bend left}{}
\SFSAutomatEdge{p0p}{}{p0c.south west}{violet, bend left}{}
\SFSAutomatEdge{p1p}{}{p1bc.south west}{violet, bend left}{}
\SFSAutomatEdge{p1p}{}{p1c.south west}{violet, bend left}{}
\SFSAutomatEdge{p2p}{}{p2ab.south west}{violet, bend left}{}
\SFSAutomatEdge{p2p}{}{p2b.south west}{violet, bend left}{}
\SFSAutomatEdge{p2p}{}{p2a.south west}{violet, bend left}{}

  \end{footnotesize}
\end{tikzpicture}
\end{center}
 \vspace{-0.8cm}
 \caption{Obliging game graph expansion of $\GG$ in \REFfig{fig:MtoG} as discussed in \REFsec{sec:example:strategy} (see \cite{chatterjee2010obliging} for a formalization). The plant can decide to chose the next event by herself (dashed green transitions) or to let the controller decide on her behalf (dotted orange transition followed by a solid violet one).}\label{fig:OGG}
 \vspace{-0.4cm}
\end{figure}

\subsection{Example}\label{sec:example:strategy}
The technical reduction from obliging games to games with $\omega$-regular winning conditions (see \REFthm{thm:obliging-games}) can be found in \cite{chatterjee2010obliging}.
We give an intuitive explanation of this construction by applying it to our example and thereby constructing a winning strategy for the obliging game $(\GG(\Aut), \FcR_{P\rightarrow S}, \FcS_P)$ over the game graph $\GG(\Aut)$ depicted in \REFfig{fig:MtoG}.

As the first step of this construction, we double the state space of $\GG$ resulting in an upper and a lower part (see \REFfig{fig:OGG}). 
The upper part is a copy of the old state space while in the lower part all states become control player states 
(indicated by their violet ellipse shape). 
Now we run the following  Gedankenexperiment: 
in every (rectangular green) state the plant can choose between deciding on the next executed event 
by herself or allowing the controller to make this choice for her. 
In the first case the play stays within the upper part (using a dashed green transition), 
while in the second case the play moves to the lower part (using a dotted orange transition) and 
the controller decides the next move on behalf of the plant (by taking an available solid violet transition). 
In each case, the play moves to a control player's state ($p_i$ (top) or $p_i'$ (bottom), with $i\in\set{0,1,2}$). 
In both cases, the controller chooses a control pattern $\gamma$ and by this always moves 
to the rectangular green state $(p_i,\gamma)$ in the upper part. 
Here, it is again the choice of the plant to either stay in the original (top) game or to move to the bottom copy. 

With this modified game in mind, we can interpret the two copies of the game graph as follows. 
In the top one, the controller is only concerned with fulfilling the specification, i.e., 
solving a standard two-player game with the winning condition $\FcR_{P\rightarrow S}$ in \eqref{ex:rabin}.

The bottom copy of the game makes sure that the resulting strategy is non-conflicting. 
Within the outlined Gedankenexperiment, this is ensured by the fact that at any point in time, 
the plant can decide to hand over all future choices of the next events to the controller and the controller must be able to 
\emph{demonstrate} that the liveness condition of the plant (i.e., $\FcS_P$) remains satisfiable along with satisfying $\FcR_{P\rightarrow S}$. 
Hence, from every reachable state in the top game, the controller must be able to give \emph{one explicit trace} 
which visits both $p_1'$ and $p_2'$ always eventually again. 
This prevents the controller from moving to a state in the top game where the plant's assumptions are persistently violated.
It should be noted that the synthesis problem over the lower game graph is actually much simpler, as it 
only involves one player (namely the controller) and thereby reduces to a simple path search. 

A gracious strategy in the original obliging game is extracted from this Gedankenexperiment as follows. First, we consider the upper and the lower game in \REFfig{fig:OGG} separately. For the upper game, we know that a supervisor disabling events $a$ and $b$ in every state is winning w.r.t.\ $\FcR_{P\rightarrow S}$ (see \REFex{expl:rabin}). Call this strategy $h^{\uparrow}$.
We can assume w.l.o.g.\ that  $h^{\uparrow}$ is memoryless\footnote{A strategy $\fz{}: q_{init}(\Qo\Qz)^*\rightarrow \Qo$ is memory-less if for all $\nu,\nu'\in (\Qz\Qo)^*$ and $q\in\Qz$ holds that $\fz{\nu q}=\fz{\nu' q}$. I.e., the strategy bases its choice of control patterns purely on the current state of the play.}, because, in any Rabin game, if there is a winning strategy, there is also a memoryless one.
This strategy forces the plant to always remain in $p_0$ and wins in the upper game by vacuously satisfying the implication. 
  
For the lower part, consider the blue transitions in \REFfig{fig:OGG_trace}, indicating an infinite trace from every state visiting both $p_1$ and $p_2$ infinitely often, fulfilling both $\FcStrong=\FcR_{P\rightarrow S}$ and $\FcWeak=\FcS_P$. 
This path immediately defines a memoryless plant and a control player strategy which we denote $g^{\downarrow}$ and $h^{\downarrow}$, respectively.

\begin{figure}
\begin{center}
   \begin{tikzpicture}[auto,scale=1.5]
  \begin{footnotesize}
          \node[mystate,fill=violet!20] (p0p) at (0,-2) {$p'_0$};
          \node[myestate,draw,fill=violet!20] (p0cp) at (1,-2.5) {$p'_0, \set{c}$};
          \node[myestate,draw,fill=violet!20] (p0acp) at (1,-2) {$p'_0, \set{a,c}$};
          \node[mystate,accepting,sws-red,fill=violet!20] (p1p) at (2,-2) {$p'_1$};
          \node[myestate,draw,fill=violet!20] (p1bcp) at (3,-1.8) {$p'_1, \set{b,c}$};
          \node[myestate,draw,fill=violet!20] (p1cp) at (3,-2.2) {$p'_1, \set{c}$};
          \node[mystate,accepting,sws-blue,fill=violet!20] (p2p) at (4,-2) {$p'_2$};
          \node[myestate,draw,fill=violet!20] (p2abp) at (5,-1.5) {$p'_2, \set{a,b}$};
          \node[myestate,draw,fill=violet!20] (p2bp) at (5,-2) {$p'_2, \set{b}$};
          \node[myestate,draw,fill=violet!20] (p2ap) at (5,-2.5) {$p'_2, \set{a}$};

\SFSAutomatEdge{p0p}{}{p0cp}{\gr}{}
\SFSAutomatEdge{p0p}{}{p0acp}{\gr}{}
\SFSAutomatEdge{p0acp}{}{p0p}{bend right,\gr}{}
\SFSAutomatEdge{p0acp}{}{p1p}{\gr}{}
\SFSAutomatEdge{p0cp}{}{p0p}{bend left,\gr}{}
\SFSAutomatEdge{p1p}{}{p1bcp}{\gr}{}
\SFSAutomatEdge{p1p}{}{p1cp}{\gr}{}
\SFSAutomatEdge{p1bcp}{}{p2p}{\gr}{}
\SFSAutomatEdge{p1bcp}{}{p0p}{bend right,\gr}{}
\SFSAutomatEdge{p1cp}{}{p0p.south}{bend left=50,\gr}{}
\SFSAutomatEdge{p2p}{}{p2abp}{\gr}{}
\SFSAutomatEdge{p2p}{}{p2ap}{bend right,\gr}{}
\SFSAutomatEdge{p2p}{}{p2bp}{\gr}{}
\SFSAutomatEdge{p2abp}{}{p2p}{bend right,\gr}{}
\SFSAutomatEdge{p2bp}{}{p2p}{bend left,\gr}{}
\SFSAutomatEdge{p2abp}{}{p1p.north}{bend right,\gr}{}
\SFSAutomatEdge{p2ap}{}{p1p.south}{bend left,\gr}{}

\SFSAutomatEdge{p0p}{}{p0acp}{blue,dashed}{}
\SFSAutomatEdge{p0acp}{}{p1p}{blue,dashed}{}
\SFSAutomatEdge{p0cp}{}{p0p}{bend left,blue,dashed}{}
\SFSAutomatEdge{p1p}{}{p1bcp}{blue}{}
\SFSAutomatEdge{p1bcp}{}{p2p}{blue}{}
\SFSAutomatEdge{p1cp}{}{p0p.south}{bend left=50,blue,dashed}{}
\SFSAutomatEdge{p2p}{}{p2ap}{bend right,blue}{}
\SFSAutomatEdge{p2bp}{}{p2p}{bend left,blue,dashed}{}
\SFSAutomatEdge{p2abp}{}{p1p.north}{bend right,blue,dashed}{}
\SFSAutomatEdge{p2ap}{}{p1p.south}{bend left,blue}{}

%
%
%

  \end{footnotesize}
         \end{tikzpicture} 
\end{center}
 \vspace{-0.8cm}
 \caption{Witness of a path from every reachable state of $\GG(\Aut)$ (dashed) to a loop (solid) satisfying 
plant and specification makings always again. 
This defines a plant and control player strategy denoted by $g^{\downarrow}$ and $h^{\downarrow}$, respectively.}\label{fig:OGG_trace}
 \vspace{-0.4cm}
\end{figure}
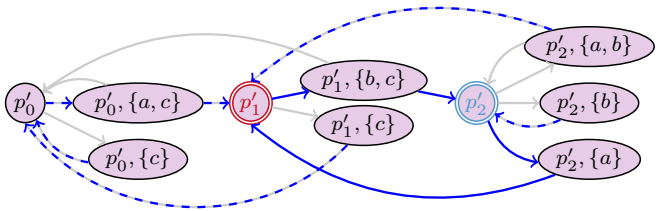

Given $h^{\uparrow}$, $g^{\downarrow}$, and $h^{\downarrow}$, we can combine them into a solution to the original synthesis problem over 
$\GG(\Aut)$ (and therefore $\Aut$) in \REFfig{fig:MtoG}, 
by adding one extra bit of memory to the controller. That is, the resulting strategy will base its decision on the current state and an additional binary-valued variable $m$ which tracks, whether the system executes a move contained in $g^{\downarrow}$ ($m=1$) or not ($m=0$).
If $m=1$, the controller executes the unique pattern chosen by $h^{\downarrow}$ in the next state. 
Otherwise, it operates according to $h^{\uparrow}$.

For the particular choices of strategies in this example we see that the only allowed event $c$ 
in $(p_0,\gamma)$ is part of $g^{\downarrow}$ and therefore triggers $h^{\downarrow}$. Hence, the actual closed loop allows the plant to 
move to $p_1$ next. 
If it does so, $h^{\downarrow}$ remains active as this move is again contained in $g^{\downarrow}$ (see \REFfig{fig:OGG_trace}). If the plant decides to stay in $p_0$, $h^{\uparrow}$ becomes active again.
Intuitively, the controller tracks whether the plant is trying to make progress towards fulfilling her liveness condition. 
If so, he is cooperating with her to achieve this goal.

\subsection{Algorithm}

The reduction outlined in the previous section via \REFthm{thm:SCTvsSCTP} and \REFthm{thm:SCTPvsOG} enables us to solve a given supervisory synthesis problem (\REFproblem{prob:SCT}) over 
a plant model $(\Lstar_P,\Lomega_P)$ w.r.t.\ a specification $\Lomega_S$ and a set of uncontrollable events 
$\Sigma_{uc}\subseteq\Sigma$ through the following steps:
\begin{compactenum}
 \item Construct a Street/Rabin synthesis automaton $(\Aut,\FcS_P,\FcR_S)$ as in \REFprop{prop:StreetRabinAut}. 
 \item Extend $\Aut$ into a game graph $\GG(\Aut)$ as in \REFdef{def:MtoG}.
 \item Solve the obliging game $(\GG(\Aut),\FcR_{P\rightarrow S},\FcS_P)$ via its reduction to standard $\omega$-regular games (see \REFthm{thm:obliging-games}).
 \item If the obliging game has no solution, also \REFproblem{prob:SCT} has no solution (see \REFthm{thm:SCTvsSCTP} and \REFthm{thm:SCTPvsOG}).
 \item If the obliging game allows for a control strategy $h$, compute its induced non-ambiguous strategy $\check{h}$ as in \eqref{equ:checkh}.
 \item Reduce $\check{h}$ to a path-based supervisor via \eqref{equ:ftoh}, which in turn defines the event-based supervisor $f$ via \eqref{equ:fvscheckf}.
 \item Then $f$ solves \REFproblem{prob:SCT} (see \REFthm{thm:SCTvsSCTP} and \REFthm{thm:SCTPvsOG}).
\end{compactenum}

\smallskip
The complexity of this algorithm can be derived from \REFthm{thm:obliging-games} in the following way. Given a finite state machine $\Aut$ with $n$ states we get a game graph $\GG(\Aut)$ with $n2^{|\ASigmac|}$ states. Further, given the Streett and Rabin conditions $\FcS_P$ and $\FcR_S$ with $l$ and $k$ pairs, we get an obliging game having a strong Rabin condition with $l+k$ pairs and a weak Streett condition with $l$ pairs. Finally, a  parity game with $\tilde{n}$ states and $\tilde{k}$ colors can be solved in $O({\tilde{n}}^{\tilde{k}})$ time. Hence, our solution can be computed in time $O((n2^{|\ASigmac|}(l+k)^2(l+k)!2^{O(l)})^{2(l+k)+2})$. If there is a supervisor, then there is a supervisor using $2(l+k)\cdot2^{O(l)}$ memory.

It should further be noted that checking if there is a path-based supervisor from a state is NP-complete \cite{thistle1995};
this already holds for a trivial liveness assumption  for the plant (i.e., $\Lomega_P=\Sigma^\omega$) as solving Rabin games is NP-complete \cite{EmersonJutla88}.
While our algorithm is sound and complete, it is possible that there is a more direct symbolic algorithm on the state space of the two-person
game that yields a more efficient implementation. Such an algorithm is given in \cite{majumdar2019environmentally} for the special case where $\Fc_P$ and $\Fc_S$ are each a generalized Büchi winning condition. 
We postpone the generalization of this algorithm to future work.
.

\bibliographystyle{IEEEtran}
\bibliography{tmuni,AKSbib}

 \end{document}